\documentclass[letterpaper, 10 pt, conference]{ieeeconf}  

\IEEEoverridecommandlockouts                              

\overrideIEEEmargins                                      



\usepackage{bm}
\usepackage{float}
\usepackage[pdftex]{graphicx}
\usepackage{tabularx}
\usepackage{verbatim}
\usepackage{color}
\usepackage{xparse}
\usepackage{hyperref}
\usepackage{xmpmulti}
\usepackage{transparent}
\usepackage{fancyhdr}
\usepackage{cite}
\usepackage{wrapfig}
\usepackage{rotating}

\addtolength{\topmargin}{.1in}
\usepackage[ruled, lined, linesnumbered, commentsnumbered, longend]{algorithm2e}


\usepackage{booktabs}
\usepackage{caption}
\usepackage{subcaption}
\usepackage{tikz}
\usepackage{hyperref}
\usepackage{wrapfig}
\usepackage{setspace}
\usepackage{graphicx}
\usepackage[all]{xy}
\usepackage{verbatim}
\usepackage{float}
\usepackage{mathrsfs}  
\usepackage{bm}
\usepackage{multirow}
\usepackage{color,soul}
  
\usepackage{amssymb,amsmath,graphicx}
  
\usepackage{amsthm}
\usepackage[compact]{titlesec}
\usepackage{diagbox}
\usepackage{array}
\usepackage{titlesec}
\usepackage{units}
\usepackage[export]{adjustbox}
\usepackage{MnSymbol}
\usepackage{amssymb}

\let\oldIEEEkeywords\IEEEkeywords
\def\IEEEkeywords{\oldIEEEkeywords\normalfont\bfseries\ignorespaces}
\newtheorem{theorem}{Theorem} 
 
\newtheorem{lemma}{Lemma}            

\newtheorem{problem}{Problem}
                       
\newtheorem{definition}{Definition}

\newtheorem{corollary}{Corollary}
\newtheorem{remark}{Remark}

\newcommand{\argmin}{\mathrm{arg}\min}



\pagenumbering{arabic}
\usepackage{mathtools}
\mathtoolsset{showonlyrefs}

\title{\LARGE \bf Distributed Differentially Private Control Synthesis for Multi-Agent Systems with Metric Temporal Logic Specifications}

\author{Nasim Baharisangari and Zhe Xu
\thanks{Nasim Baharisangari and Zhe Xu are with the School for Engineering of Matter, Transport and Energy, Arizona State University, Tempe, AZ 85287. {\tt\small $\{$nbaharis, xzhe1$\}$@asu.edu} (Corresponding author: Zhe Xu)}}

\titleformat{\section}{\centering\normalfont\scshape}{\thesection}{0em}{.~}
\titleformat{\paragraph}[runin]{\bfseries}{}{-2em}{}[:]



\DeclareMathAlphabet{\mathpzc}{OT1}{pzc}{m}{it}


\DeclareMathOperator{\luntil}{\mathbf{U}}
\DeclareMathOperator{\leventually}{\mathbf{F}}
\DeclareMathOperator{\lglobally}{\mathbf{G}}


\newcommand{\real}{\ensuremath{\mathbb{R}}}

\newcommand{\nat}{\ensuremath{\mathbb{N}}}

\newcommand{\distime}{\ensuremath{\mathbb{T}}}





\newcommand{\node}{\ensuremath{c}}

\newcommand{\Gennode}{\ensuremath{c}}

\newcommand{\nodeNo}{\ensuremath{n_C}}

\newcommand{\nodeSet}{\ensuremath{\mathcal{C}}}

\newcommand{\graphG}{\ensuremath{G}}

\NewDocumentCommand{\graphTraj}{O{}}{\ensuremath{g_{#1}}}

\NewDocumentCommand{\robustness}{O{}}{\ensuremath{r_{#1}}} 

\newcommand{\MinRob}{\ensuremath{r_{\text{\rm{min}}}}}

\newcommand{\counteri}{\ensuremath{i}}

\newcommand{\counterj}{\ensuremath{j}}

\newcommand{\dimcounter}{\ensuremath{d}}

\newcommand{\counterk}{\ensuremath{k}}

\newcommand{\timeIndex}{\ensuremath{t}}

\newcommand{\genPair}[2]{\ensuremath{(\node,\timeIndex)}}


\SetKwInput{Hyperparameter}{Hyperparameter}
\SetKwInput{Input}{Input}
\SetKwInput{Output}{Output}
\SetKwBlock{Loop}{loop}{end}

\SetKwInOut{Parameter}{Parameter}
\SetKwFunction{AlgoLearnLTL}{\text{Learn-LTL}}

\SetKwFunction{StopCriteria}{\text{Stopping-Criteria}}




\newcommand{\M}{\ensuremath{M}}

\newcommand{\horizon}{\ensuremath{H}}

\newcommand{\PiCon}[2]{\ensuremath{P_{#2}[#1]}}

\newcommand{\KalmanGain}[2]{\ensuremath{\boldsymbol{K}_{#2}[#1]}}

\newcommand{\StCovarianceMat}[2]{\ensuremath{\boldsymbol{\Sigma}}_{#2}[#1]}

\newcommand{\NoiseCo}[2]{\ensuremath{\boldsymbol{\mathpzc{K}}}^{#1}_{#2}}

\newcommand{\BoldControlInput}[2]{\ensuremath{\mathbf{U}_{#2}[#1]}}

\newcommand{\BoldControlInputHo}[3]{\ensuremath{\boldsymbol{u}^{#1}_{#2}{[#3]}}}

\newcommand{\BoldControlInputOld}[2]{\ensuremath{\mathbf{\Tilde{U}}^{#1}_{#2}}}

\newcommand{\ControlInput}[2]{\ensuremath{\boldsymbol{u}_{#2}[#1]}}

\newcommand{\NonBoldControlInput}[2]{\ensuremath{{u}_{#2}[#1]}}

\newcommand{\Env}{\ensuremath{\mathcal{S}}}

\newcommand{\EstSt}[2]{\ensuremath{\hat{\boldsymbol{s}}}_{#2}[#1]}

\newcommand{\EstStAgt}[2]{\ensuremath{\hat{{s}}}_{#2}[#1]}

\newcommand{\Boldy}[2]{\ensuremath{{\boldsymbol{y}}_{#2}[#1]}}

\newcommand{\y}[2]{\ensuremath{{{y}_{#2}[#1]}}}

\newcommand{\Noisyy}[2]{\ensuremath{{{\Tilde{y}}_{#2}[#1]}}}

\newcommand{\BoldNoisyy}[2]{\ensuremath{\boldsymbol{\Tilde{y}}}_{#2}[#1]}

\newcommand{\GM}[2]{\ensuremath{{\boldsymbol{\hat{\zeta}}}_{#2}{[#1]}}}

\newcommand{\GMNoT}[1]{\ensuremath{{\boldsymbol{\hat{\zeta}}}_{#1}}}

\newcommand{\GMnonBold}[2]{\ensuremath{{{\zeta}}_{#2}{[#1]}}}

\newcommand{\GMori}[2]{\ensuremath{{\boldsymbol{{\zeta}}}_{#2}{[#1]}}}

\newcommand{\GMoriNoT}[1]{\ensuremath{{\boldsymbol{{\zeta}}}_{#1}}}

\newcommand{\ActGM}[2]{\ensuremath{{{\eta}}_{#2}[#1]}}

\newcommand{\ActGMNoT}[1]{\ensuremath{{{\eta}}_{#1}}}

\newcommand{\AgtTrajNoT}[1]{\ensuremath{{{{{s}}}}_{#1}}}

\newcommand{\AgtTraj}[2]{\ensuremath{{{{{s}}}}_{#2}[#1]}}

\newcommand{\TopMatrix}{\ensuremath{{\boldsymbol{V}}}}

\newcommand{\ProbMatrix}[1]{\ensuremath{{\boldsymbol{W}}_{#1}}}

\newcommand{\GraphMat}{\ensuremath{{\boldsymbol{D}}}}

\newcommand{\EstErrorBound}[2]{\ensuremath{{\rho}_{#2}[#1]}}

\newcommand{\agtID}{\ensuremath{{{i}}}}

\newcommand{\St}[2]{\ensuremath{{{s_{#2}[#1]}}}}

\newcommand{\StNoT}[1]{\ensuremath{{{s_{#1}}}}}

\newcommand{\BoldSt}[2]{\ensuremath{{{\boldsymbol{s}_{#2}[#1]}}}}

\newcommand{\noise}[2]{\ensuremath{\boldsymbol{v}}^{#1}_{#2}}

\newcommand{\AugA}[2]{\ensuremath{{\boldsymbol{A}}}^{#1}_{#2}}

\newcommand{\AugB}[2]{\ensuremath{\boldsymbol{{B}}}^{#1}_{#2}}

\newcommand{\AugC}[2]{\ensuremath{\boldsymbol{{C}}^{#1}_{#2}}}

\newcommand{\Confidence}[2]{\ensuremath{\mathcal{P}}_{#2}[#1]}

\newcommand{\ConfidenceLe}[2]{\ensuremath{{\mathbb{P}}}^{#1}_{#2}}

\newcommand{\MinConf}{\ensuremath{\gamma_{\textrm{min}}}}

\newcommand{\STLtimeIndex}{\ensuremath{j}}

\newcommand{\formula}[2]{\ensuremath{\phi}^{#1}_{#2}}

\newcommand{\objective}{\ensuremath{J}}

\newcommand{\Horizon}{\ensuremath{H}}

\newcommand{\counterl}{\ensuremath{l}}

\newcommand{\MILP}{\ensuremath{\text{Diff-MILP}}}

\newcommand{\neighborSet}[1]{\ensuremath{\mathcal{N}}_{#1}}

\newcommand{\AgentSet}[1]{\ensuremath{\mathcal{Z}}_{#1}}


\newcommand{\DiffEps}[2]{\ensuremath{{\epsilon}^{#1}_{#2}}}

\newcommand{\DiffDelta}[2]{\ensuremath{{\delta}}^{#1}_{#2}}

\begin{document}

\maketitle
\thispagestyle{empty}
\pagestyle{empty}

\begin{abstract}\label{abstract}
In this paper, we propose a distributed differentially private receding horizon control (RHC) approach for multi-agent systems (MAS) with metric temporal logic (MTL) specifications. In the MAS considered in this paper, each agent privatizes its sensitive information from other agents using a differential privacy mechanism. In other words, each agent adds privacy noise (e.g., Gaussian noise) to its output to maintain its privacy and communicates its noisy output with its neighboring agents. We define two types of MTL specifications for the MAS: agent-level
specifications and system-level specifications. Agents should collaborate to satisfy the system-level MTL specifications with a minimum probability while each agent must satisfy its own agent-level MTL specifications at the same time. In the proposed distributed RHC approach, each agent communicates with its neighboring agents to acquire their noisy outputs and calculates an estimate of the system-level trajectory. Then each agent synthesizes its own control inputs such that the system-level specifications are satisfied with a minimum probability while the agent-level specifications are also satisfied. In the proposed optimization formulation of RHC, we directly incorporate Kalman filter equations to calculate the estimates of the system-level trajectory, and we use mixed-integer linear programming (MILP) to encode the MTL specifications as optimization constraints. Finally, we implement the proposed distributed RHC approach in a case study. 

\end{abstract}

\section{Introduction}\label{Sec:Introduction}

In multi-agent systems (MAS), it is common that agents collaborate to accomplish different types of system-level tasks through communication with each other, where the communication occurs among the agents that are neighbors \cite{KegeleirsSWAM}. Distributed control of an MAS, in comparison with centralized control has the advantages of scalability and fast computing \cite{GuanDist}\cite{ChenDist}\cite{DimitraDist}\cite{PelegDist}.
Furthermore, the centralized control can be computationally expensive, and if the central control unit fails, then the whole system may fail. In comparison,  distributed control has a better potential in fault tolerance \cite{PATTON2007280}. Distributed control has been used in many applications such as mobile robots \cite{MuDist} and autonomous underwater vehicles (AUVs) \cite{ZuoDist}. 

In an MAS, it is possible that while the agents are cooperating to satisfy system-level tasks through communication, each agent should protect its sensitive information (e.g., actual position state) from its neighboring agents \cite{XuDiff}. In such situations, \textit{differential privacy} can be employed to protect the privacy of the agents in an MAS. Differential privacy ensures that an adversary is not able to deduce an agent's sensitive information while allowing decision making on system level \cite{PeterDiff}\cite{DworkDiff}\cite{DworkDiff2}. For dynamical systems (e.g., multi-agent systems), differential privacy protects the privacy of each agent by adding \textit{differential privacy noise} (e.g., Gaussian noise) to the trajectories containing sensitive information such that an adversary is not able to deduce the privatized trajectories \cite{XuDiff}.

\textit{Metric temporal logic} (MTL) can be used to define different complicated tasks for an MAS due to being expressive and human-interpretable \cite{Linard} \cite{Seshia2016}. MTL is one type of temporal logics which is defined over real-valued data in discrete time domain \cite{Donze}. In addition, MTL is amenable to formal analysis, and these advantages make MTL a good candidate to define complicated tasks, such as collision avoidance \cite{Plaku2016MotionPW}, in the form of MTL formulas \cite{Haghighi2016RoboticSC}
\cite{FARAHANI2015323}\cite{xuuuuu}.


In the MAS considered in this paper, the agents collaborate with each other to satisfy system-level tasks while protecting their privacy and satisfying their own agent-level tasks at the same time, where the tasks are defined in the form of MTL formulas and each agent incorporates differential privacy to privatize its trajectory containing sensitive information. For satisfying the system-level tasks, each agent calculates an estimate of the system-level trajectory. First, each agent communicates with its neighboring agents to acquire their noisy outputs. Then, each agent employs Kalman filter to compute the estimates of the states of its neighboring agents and use the estimated information to estimate the system trajectory. In the next step, each agent uses receding horizon control (RHC) to synthesize control inputs for satisfying the system-level tasks and the agent-level tasks.

\paragraph{Contributions} We summarize our contributions as follows. (a) We propose a distributed differentially private receding horizon control (RHC) formulation for an MAS which is considered a stochastic dynamic system with MTL specifications. (b) In the proposed approach, we directly incorporate Kalman filter equations in the optimization formulation of RHC to account for the uncertainties stemming from differential privacy. In the proposed optimization formulation, we employ a \textit{one-step ahead predication} of the noisy outputs to be used in the Kalman filter equations. (c) In the proposed distributed RHC, by assigning individual tasks in addition to system-level tasks, we utilize a higher portion of the capacity of each agent in accomplishing different tasks. 


\section{Preliminaries}\label{Sec:Preliminaries}
In this section, we explain the notations, definitions, and concepts that we use in this paper. Table \ref{table:1} shows the important notations that we use in this section and the following sections.

\subsection{System Dynamics and Features of Multi-Agent Systems}\label{subsec: dynamics}

In this paper, an MAS consisting of $|\AgentSet{}|$ agents moves in a bounded environment $\Env\subseteq{\real}^{|\mathcal{D}|}$, where $\AgentSet{}$ denotes the set of the agents in the MAS, $|\AgentSet{}|$ denotes the cardinality of $\AgentSet{}$, and $\mathcal{D}=\{d_1,d_2,...,d_{|\mathcal{D}|}\}$ with $|\mathcal{D}|$ being the cardinality of the set $\mathcal{D}$. We represent the system dynamics of this MAS in the finite discrete time domain $\distime=\{1,2,...,\tau\}$ (where $\tau\in\nat=\{1,2,...\}$) with Eq. \eqref{eq:swarm}. 
\begin{equation}\label{eq:swarm}
    \BoldSt{\timeIndex}{}= \AugA{}{}\BoldSt{\timeIndex-1}{}+\AugB{}{}\ControlInput{\timeIndex-1}{},
\end{equation}
where $\BoldSt{\timeIndex}{}=[({\St{\timeIndex}{1}})^{T},({\St{\timeIndex}{2}})^{T},...,({\St{\timeIndex}{|\AgentSet{}|}})^{T}]^{T}$ is the vector of the states of the agents in the MAS at time step $\timeIndex$ and ${\St{\timeIndex}{\agtID}}=[\St{\timeIndex}{\agtID,\dimcounter_{1}},\St{\timeIndex}{\agtID,\dimcounter_{2}},...,\St{\timeIndex}{\agtID,\dimcounter_{|\mathcal{D}|}}]^{T}$ (where $\agtID\in{\AgentSet{}}$) denotes the state of agent $\agtID$ at time step $\timeIndex$; $\ControlInput{\timeIndex-1}{}=[({\NonBoldControlInput{\timeIndex-1}{1}})^{T},({\NonBoldControlInput{\timeIndex-1}{2}})^{T},...,({\NonBoldControlInput{\timeIndex-1}{|\AgentSet{}|}})^{T}]^{T}$ is the vector of the control inputs at time step $\timeIndex-1$ and ${\NonBoldControlInput{\timeIndex-1}{\agtID}}=[\NonBoldControlInput{\timeIndex-1}{\agtID,\dimcounter_{1}},\NonBoldControlInput{\timeIndex-1}{\agtID,\dimcounter_{2}},...,\NonBoldControlInput{\timeIndex-1}{\agtID,\dimcounter_{|\mathcal{D}|}}]^{T}$ is the control input vector of agent $\agtID$ at time step $\timeIndex-1$, and $\AugA{}{}$ and $\AugB{}{}$ are $|\AgentSet{}|\times|\AgentSet{}|$ diagonal time-invariant matrices. The dynamics equation of agent $\agtID$ can be expressed as $\St{\timeIndex}{\agtID}=\AugA{}{\agtID*}\St{\timeIndex-1}{\agtID}+\AugB{}{\agtID*}\NonBoldControlInput{\timeIndex-1}{\agtID}$, where $\St{\timeIndex}{\agtID}\in{\Env}$ and $\NonBoldControlInput{\timeIndex-1}{\agtID}\in\mathcal{U}=\{u|\Vert{u}\Vert_{\infty}\leq{u_{\text{max}}}\}$ for all $\agtID\in\AgentSet{}$ and for all $\timeIndex\in\distime$, and $\AugA{}{\agtID*}$ and $\AugB{}{\agtID*}$ refer to the $\agtID$-th row of matrix $\AugA{}{}$ and $\AugB{}{}$, respectively.
\begin{definition}\label{def:swarm traj}
We define the \textbf{system-level} trajectory $\ActGMNoT{}$ as a function $\ActGMNoT{}:\distime\rightarrow\Env$ to denote evolution of the average of the states of all the agents in the MAS within a finite time horizon defined in the discrete time domain $\distime$ and we define $\ActGM{\timeIndex}{}:=\frac{1}{|\AgentSet{}|}\sum\limits_{\agtID=1}^{|\AgentSet{}|}\St{\timeIndex}{\agtID}$. We also define the \textbf{agent-level} trajectory $\AgtTrajNoT{\agtID}$ as a function $\AgtTrajNoT{\agtID}:\distime\rightarrow\Env$ to denote the evolution of the state of each agent $\agtID$ within a finite time horizon defined in the discrete time domain $\distime$.
\end{definition}
In this paper, we represent the topology of the MAS with an \textit{undirected graph} $\graphG$ that is time-invariant.

\begin{definition}
We denote an undirected graph by $\graphG=(\nodeSet,\mathcal{E})$, where $\nodeSet=\{\Gennode_1,\Gennode_2,...,\Gennode_{\nodeNo}\}$ is a finite set of nodes,  $\mathcal{E}=\{e_{1,2},e_{1,3},...,e_{1,n_{\mathcal{E}}},e_{2,3},...,e_{n_{\mathcal{E}-1},n_{\mathcal{E}}}\}$ is a finite set of edges, and $n_{\mathcal{C}},n_{\mathcal{E}}\in\mathbb{N}=\{1,2,...\}$. In the set of edges $\mathcal{E}$, $e_{i,l}$ represents the edge that connects the nodes $\Gennode_{i}$ and $\Gennode_{l}$.
\end{definition}

Each node $\node_{i}$ of the undirected graph $\graphG$ represents an agent in the MAS. Each edge $e_{i,l}$ connecting the nodes $i$ and $l$ represents the fact that agnets $i$ and $l$ are neighbors, i.e., agent $i$ and $l$ communicate with each other. Hereafter, we denote the set of the neighboring agents of agent $\agtID$ with $\AgentSet{\agtID}$. Also, we denote the adjacency matrix of the graph $G$ with $\GraphMat$.

\subsection{Differential Privacy}\label{subsec: diff privacy}
In this subsection, we review the theoretical framework of differential privacy that we use in this paper \cite{KasraDiff}\cite{PappasDiff}\cite{XuDiff}. To apply differential privacy to protect the sensitive information of each agent, we use the ``input perturbation" approach, i.e., each agent adds noise to its state and then shares its noisy output with its neighboring agents. Before formalizing the definition of differential privacy, we explain the preliminary definitions and notations related to differential privacy.

For a trajectory $\kappa_i=\St{0}{\agtID},\St{1}{\agtID},...,\St{t}{\agtID},...$, where $\St{t}{\agtID}=\kappa_i[t]\in{\mathbb{R}}^{|\mathcal{D}|}$ for all $t$, we use the $\ell_p$ norm as $\Vert{\kappa_i}\Vert_{\ell_p}:=\left(\sum\limits^{\infty}_{t=1}\Vert{\St{\timeIndex}{\agtID}}\Vert^{p}_{p}\right)^{\frac{1}{p}}$, where $\Vert{.}\Vert_{p}$ is the ordinary $p$-norm on $\mathbb{R}^{d}$ ($d\in\nat\cup\{0\}$), and we define the set $\ell^{d}_p:=\{\kappa_{i}|\St{\timeIndex}{\agtID}\in{\mathbb{R}}^{d},\Vert{\kappa}\Vert_{\ell_p}<\infty\}$. Then, we define the truncation operator $\mathpzc{P}_{Q}$ over trajectories $\kappa_i$ as follows: $\mathpzc{P}_{Q}(\kappa_{i})=\St{\timeIndex}{\agtID}$ if $t\leq{Q}$; and $\mathpzc{P}(\kappa_{i})=0$, otherwise. Now, we define the set ${\Tilde{\ell}}^{|\mathcal{D}|}_{2}$ as the set of sequences of vectors in $\mathbb{R}^{|\mathcal{D}|}$ whose finite truncations are all in $\ell^{|\mathcal{D}|}_{2}$ ($p=2$ and $d=|\mathcal{D}|$). In other words, $\kappa_{i}\in{\Tilde{\ell}}^{|\mathcal{D}|}_{2}$ if and only if $\mathpzc{P}_{Q}(\kappa_{i})\in{{\ell}^{|\mathcal{D}|}_{2}}$ for all $Q\in\nat$.
\begin{definition}
\textbf{(Adjacency)} For a fixed adjacency parameter $\nu_i$, the adjacency relation $\mathpzc{A}_{\nu_i}$, for all $\kappa_{i},\kappa^{\prime}_{i}\in{{\Tilde{\ell}}^{|\mathcal{D}|}_{2}}$, is defined as Eq. \eqref{eq:adjan}. 
\end{definition}
\vspace{-20pt}
\begin{align}
 {\mathpzc{A}_{\nu_i}(\kappa_i,\kappa^{\prime}_{i})}
 &=
  \begin{cases}
   1, &\text{if}~ \Vert{\kappa_i-\kappa^{\prime}_{i}}\Vert_{\ell_2}\leq{\nu_i}\\
    0, & \textrm{otherwise}.
 \end{cases}
  \label{eq:adjan}
\end{align}
In other words, two trajectories generated by agent $\agtID$ are adjacent if the $\ell_2$ distance between the two is less or equal to $\nu_i$. Differential privacy is expected to make agent $i$'s true trajectory, denoted by $\kappa_i$, indistinguishable from all other trajectories contained in an $\ell_2$-ball of radius $\nu_i$ centred at the true trajectory $\kappa_i$. 

We use a probability space $(\Omega, \mathcal{F}, \mathbb{P})$ in order to state a formal definition for differential privacy for dynamical systems which specifies the probabilistic guarantees of privacy. For the formal definition of differential privacy mechanism that we explain shortly, we assume that the outputs of the mechanism is in ${\Tilde{\ell}}^{q}_{2}$ and uses a $\sigma$-algebra over ${\Tilde{\ell}}^{q}_{2}$ which is denoted by $\Theta^{q}_{2}$ and the construction of $\Theta^{q}_{2}$ is explained in \cite{hajek_2015}.
\begin{definition}\label{def:diff}
\textbf{(($\DiffEps{}{i},\DiffDelta{}{i}$)-Differential Privacy for Agent $i$)} With $\DiffEps{}{i}>0$ and $\DiffDelta{}{i}\in(0,\frac{1}{2})$ for agent $i$, a mechanism $\mathcal{M}:{\Tilde{\ell}}^{|\mathcal{D}|}_{2}\times{\Omega}\rightarrow{{\Tilde{\ell}}^{q}_{2}}$ is ($\DiffEps{}{i},\DiffDelta{}{i}$)-differentially private if for all adjacent trajectories $\kappa_{i},\kappa^{\prime}_{i}\in{{\Tilde{\ell}}^{|\mathcal{D}|}_{2}}$ and for all $S\in\Theta^{q}_{2}$, we have the following (Eq. \eqref{eq:diff}).   
\end{definition}
\vspace{-5pt}
\begin{equation}\label{eq:diff}
   \mathbb{P}[\mathcal{M}(\kappa_i)\in{S}]\leq{e}^{\DiffEps{}{i}}\mathbb{P}[\mathcal{M}(\kappa^{\prime}_{i})\in{S}]+\DiffDelta{}{i} 
\end{equation}
At each time step $\timeIndex$, each agent $\agtID$ outputs the value $\AugC{}{i*}\y{t}{\agtID}$, where $\AugC{}{i*}$ is the $i$-th row of the $|\AgentSet{}|\times|\AgentSet{}|$ time-invariant diagonal matrix $\AugC{}{}$. At time step $t$, for protecting the privacy of agent $i$, noise must be added to its output $\y{t}{i}$ so an adversary can not infer agent $i$'s trajectory $\kappa_i[t]$ from its output $\y{t}{\agtID}$. Calibrating the level of noise is done using ``sensitivity" of an agent's output.
\begin{definition}\label{def:sensi}
(\textbf{Sensitivity for Input Perturbation Privacy}) The $\ell_2$-norm sensitivity of agent $i$'s output map is the greatest distance between two output trajectories ${\vartheta}_{i}=\y{0}{i},\y{1}{i},...,\y{t}{i},...$ and $\vartheta^{\prime}_{i}=\y{0}{i},\y{1}{i},...,\y{t}{i},...$ defined as Eq. \eqref{eq:senpert} for $\kappa_{i},\kappa^{\prime}_{i}\in{{\Tilde{\ell}}^{|\mathcal{D}|}_{2}}$.
\end{definition}
\vspace{-10pt}
\begin{equation}\label{eq:senpert}
    \Delta_{\ell_2}\vartheta_i:= \sup\limits_{\kappa_{i},\kappa^{\prime}_{i}}\Vert{\AugC{}{i*}\kappa_i-\AugC{}{i*}\kappa^{\prime}_{i}}\Vert_{\ell_2}
\end{equation}
We can use $|\AugC{}{i*}|\nu_i$ as an upper bound for $\Delta_{\ell_2}\vartheta_i$ \cite{PappasDiff}. One of the most well-known mechanisms to enforce differential privacy is the Gaussian mechanism which requires adding Gaussian noise to the outputs of agent $i$.
\begin{lemma}\label{lem: gaussian}
\textbf{(Input Perturbation Gaussian Mechanism for Linear Systems)} For agent $i$ with trajectory $\kappa_{i}\in{\Tilde{\ell}}^{|\mathcal{D}|}_{i}$, we have the privacy parameters $\DiffEps{}{i}>0$ and $\DiffDelta{}{i}\in(0,\frac{1}{2})$, output trajectory $\vartheta_i\in{\Tilde{\ell}}^{q}_{i}$, and we denote agent $i$'s $\ell_2$-norm sensitivity by $\Delta_{\ell_2}\vartheta_i$. The Gaussian mechanism for ($\DiffEps{}{i},\DiffDelta{}{i}$)-differential privacy is defined as Eq. \eqref{eq:noisyy}.
\begin{equation}\label{eq:noisyy}
    \Tilde{y}_i[t]=y_i[t]+v_i[t],
\end{equation}
where $v_i[t]$ is a stochastic process with $v_i[t]\sim{\mathcal{N}}\left(0,(\sigma_{i})^2I_{q}\right)$, $I_q$ is a $q\times{q}$ identity matrix, and 
\begin{equation}
    \sigma_i\geq{\frac{}{}}\frac{\Delta_{\ell_2}\vartheta_i}{2\DiffEps{}{i}}\left(\iota_{\DiffDelta{}{}{i}}+\sqrt{(\iota_{\DiffDelta{}{i}})2\DiffEps{}{i}}\right)~\text{with}~ \iota_{\DiffDelta{}{i}}=:\mathpzc{G}^{-1}(\DiffDelta{}{i}),
\end{equation}
where $\mathpzc{G}(y):=\frac{1}{\sqrt{2\pi}}\int^{\infty}_{y}e^{-\frac{z^2}{2}}dz$.
\end{lemma}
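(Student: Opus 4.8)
The plan is to verify Definition~\ref{def:diff} directly by studying the \emph{privacy-loss} random variable of the two output laws and showing it obeys a tail bound at level $\delta_i$. First I would fix an adjacent pair $\kappa_i,\kappa_i'$ and observe that, under the input-perturbation mechanism \eqref{eq:noisyy}, the law of $\mathcal{M}(\kappa_i)$ is Gaussian with mean equal to the clean output $\AugC{}{i*}\kappa_i$ and covariance $\sigma_i^2 I$, and similarly for $\kappa_i'$, the two means differing in $\ell_2$ by at most $\Delta_{\ell_2}\vartheta_i$. Since the $\sigma$-algebra $\Theta^{q}_{2}$ is generated by finite-dimensional cylinder sets and the noise $v_i[t]$ is independent across $t$, it suffices to establish the guarantee on every finite truncation $\mathpzc{P}_{Q}$ and then let $Q\to\infty$; on each truncation the two Gaussian measures are mutually absolutely continuous, so the density ratio is well defined.

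Next I would compute the law of the privacy loss. Denoting by $\mu_{\kappa_i},\mu_{\kappa_i'}$ the respective output laws, writing $a=\AugC{}{i*}\kappa_i$, $b=\AugC{}{i*}\kappa_i'$ and $\Delta=\Vert a-b\Vert_{\ell_2}\le\Delta_{\ell_2}\vartheta_i$, completing the square in the Gaussian densities shows that $L_i:=\ln\frac{d\mu_{\kappa_i}}{d\mu_{\kappa_i'}}$ is an affine function of the observation, and that under $\omega\sim\mathcal{M}(\kappa_i)$ it is Gaussian, $L_i\sim\mathcal{N}\!\big(\tfrac{\Delta^2}{2\sigma_i^2},\tfrac{\Delta^2}{\sigma_i^2}\big)$. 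The standard decomposition then does the rest: on the event $\{L_i\le\epsilon_i\}$ the density ratio is at most $e^{\epsilon_i}$, so for every $S\in\Theta^{q}_{2}$ we have $\mathbb{P}[\mathcal{M}(\kappa_i)\in S]\le e^{\epsilon_i}\,\mathbb{P}[\mathcal{M}(\kappa_i')\in S]+\mathbb{P}[L_i>\epsilon_i]$, which reduces the problem to forcing $\mathbb{P}[L_i>\epsilon_i]\le\delta_i$.

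To close the argument I would standardize $L_i$: with mean $\Delta^2/(2\sigma_i^2)$ and standard deviation $\Delta/\sigma_i$, $\mathbb{P}[L_i>\epsilon_i]=\mathpzc{G}\!\big((\epsilon_i-\Delta^2/(2\sigma_i^2))/(\Delta/\sigma_i)\big)$, and because $\mathpzc{G}$ is strictly decreasing this is $\le\delta_i$ exactly when $(\epsilon_i-\Delta^2/(2\sigma_i^2))/(\Delta/\sigma_i)\ge\iota_{\delta_i}=\mathpzc{G}^{-1}(\delta_i)$. Putting $x=\Delta/\sigma_i$, this is the quadratic inequality $x^2+2\iota_{\delta_i}x-2\epsilon_i\le0$; taking its positive root and rationalizing the denominator gives $\sigma_i\ge\frac{\Delta}{2\epsilon_i}\big(\iota_{\delta_i}+\sqrt{\iota_{\delta_i}^2+2\epsilon_i}\big)$. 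Finally, since the tail $\mathbb{P}[L_i>\epsilon_i]$ is monotone increasing in $\Delta$, replacing $\Delta$ by its supremum $\Delta_{\ell_2}\vartheta_i$ yields the worst-case threshold stated in the lemma, valid uniformly over all adjacent pairs and hence, by symmetry of adjacency, in both directions of \eqref{eq:diff}.

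The step I expect to be the main obstacle is the measure-theoretic passage from finite truncations to full sequences in ${\Tilde{\ell}}^{q}_{2}$: one must argue that the finite-horizon privacy-loss tails control the infinite-horizon mechanism, using the construction of $\Theta^{q}_{2}$ from the cited reference together with the fact that the truncated sensitivities increase to $\Delta_{\ell_2}\vartheta_i$. By contrast, the analytic core --- identifying the privacy loss as Gaussian, the sufficient-condition decomposition, and solving the quadratic for $\sigma_i$ --- is routine.
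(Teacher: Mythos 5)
Your proof is correct and takes essentially the same route as the paper, whose entire proof of this lemma is the citation ``See \cite{PappasDiff}'': your privacy-loss decomposition $\mathbb{P}[\mathcal{M}(\kappa_i)\in S]\leq e^{\epsilon_i}\mathbb{P}[\mathcal{M}(\kappa_i')\in S]+\mathbb{P}[L_i>\epsilon_i]$ with $L_i\sim\mathcal{N}\bigl(\Delta^2/(2\sigma_i^2),\Delta^2/\sigma_i^2\bigr)$, the tail bound via $\mathpzc{G}$, the quadratic in $\Delta/\sigma_i$, and the truncation argument over the cylinder $\sigma$-algebra $\Theta^{q}_{2}$ reconstruct exactly the argument of Le Ny and Pappas. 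As a useful byproduct, your derived threshold $\sigma_i\geq\frac{\Delta_{\ell_2}\vartheta_i}{2\epsilon_i}\bigl(\iota_{\delta_i}+\sqrt{\iota_{\delta_i}^{2}+2\epsilon_i}\bigr)$ is the correct bound from that reference, which shows that the radicand $(\iota_{\delta_i})2\epsilon_i$ in the lemma as printed is a typo for $\iota_{\delta_i}^{2}+2\epsilon_i$.
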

\begin{proof}
See \cite{PappasDiff}.
\end{proof}
More clearly, the Gaussian mechanism adds i.i.d Gaussian noise point-wise in time to the output of agent $i$ to keep its state private. In this paper, we assume that the Gaussian noise $v_i$ is time-invariant. Also, we denote the vector of the noisy outputs of all the $|\AgentSet{}|$ agents at time step $t$ with $\BoldNoisyy{\timeIndex}{}$. In addition in this paper, we apply the differential privacy mechanism to the finite trajectory $\AgtTrajNoT{i}$ for each agent $i$.

\subsection{Metric Temporal Logic}
\label{metric temporal logic}
In this subsection, we briefly review the metric temporal logic (MTL) \cite{NasimUncer}.
We start with the Boolean semantics of MTL.
The domain $\mathbb{B}=\{True,False\}$ is the Boolean domain.
Moreover, we introduce a set $\varPi$ which is a set of \textit{atomic predicates} each of which maps $\Env$ to $\mathbb{B}$.
Each of these predicates can hold values \textit{True} or \textit{False}.
The syntax of MTL is defined recursively as follows.
\begin{align*}
    \formula{}{} &:=
    \top
    \mid \pi
    \mid \lnot\formula{}{}
    \mid \formula{}{1}\land\formula{}{2}
    \mid \formula{}{1}\lor\formula{}{2}
    \mid \formula{}{1}\luntil_{I}\formula{}{2}
\end{align*}

\noindent where $\top$ stands for the Boolean constant \textit{True},
$\pi$ is an atomic predicate such that $\pi\in\varPi$.
$\lnot$ (negation), $\land$ (conjunction), $\lor$ (disjunction) 
are standard Boolean connectives, and ``$\luntil$'' is the temporal operator ``until''.
We add syntactic sugar, and introduce the temporal operators ``$\leventually$'' and ``$\lglobally$'' representing ``{eventually}'' and ``always'', respectively.
$I$ is a time interval of the form $I=[a,b)$, where $\textcolor{black}{a < b}$, and they are non-negative integers. We define the set of the states that satisfy $\pi$ as $\mathcal{O}(\pi)\subset{\Env}$.

We denote the distance from $\AgtTrajNoT{}$ to a set $\mathcal{J}\subseteq{\Env}$ as $\mathbf{dist}_{{f}}(\AgtTrajNoT{},\mathcal{J}):=\inf\{{f}({\AgtTrajNoT{}},s^{\prime})|s^{\prime}\in{cl}(\mathcal{J})\}$, where $f$ is a metric on $\Env$, and $cl(\mathcal{J})$ denotes the closure of the set $\mathcal{J}$. In this paper, we use the metric ${f}(\AgtTrajNoT{},s^{\prime})=\Vert{\AgtTrajNoT{}-s^{\prime}}\Vert_{2}$, where $\Vert{.}\Vert_{2}$ denotes the 2-norm. We denote the depth of $\AgtTrajNoT{}$ in $\mathcal{J}$ by $\mathbf{depth}_{{f}}(\AgtTrajNoT{},\mathcal{J}):=\mathbf{dist}_{{f}}(\AgtTrajNoT{},\Env\setminus\mathcal{J})$. We define the signed distance from $\AgtTrajNoT{}$ to $\mathcal{J}$ as $\mathbf{Dist}_{{f}}(\AgtTrajNoT,\mathcal{J}):=-\mathbf{dist}_{f}(\AgtTrajNoT{},\mathcal{J})$, if $\AgtTrajNoT{}\not\in{\mathcal{J}}$; and $\mathbf{Dist}_{{f}}(\AgtTrajNoT{},\mathcal{J}):=\mathbf{depth}_{{f}}(\AgtTrajNoT{},\mathcal{J})$ if $\AgtTrajNoT{}\in{\mathcal{J}}$ \cite{XuDiff}.

\begin{definition}
The \textbf{minimum necessary length} of an MTL formula $\formula{}{}$, denoted by $\horizon(\formula{}{})$, is the minimum time steps required to evaluate the truth value of $\formula{}{}$.
\end{definition}

\begin{definition}
The Boolean semantics of an MTL formula $\formula{}{}$ with the necessary length of $\horizon(\formula{}{})$, for a trajectory $\AgtTrajNoT{}$ at time step $t$ is defined recursively as follows.
\[
\begin{split}
(\AgtTrajNoT{},t)\models\pi~\mbox{iff}&~~
t\leq{\horizon(\phi)}~\mbox{and}~
\AgtTraj{t}{}\in\mathcal{O}(\pi)\\
(\AgtTrajNoT{},t)\models\lnot\formula{}{}~\mbox{iff}&~~(\AgtTrajNoT{},t)\not\models\formula{},\\
(\AgtTrajNoT{},t)\models\formula{}{1}\wedge\formula{}{2}~\mbox{iff}&~~(\AgtTrajNoT{},t)\models\formula{}{1}~\mbox{and}\\
&~~(\AgtTrajNoT{},t)\models\formula{}{2},\\
(\AgtTrajNoT{},t)\models\formula{}{1}\luntil_{[a,b)}\formula{}{}~\mbox{iff}&~~\exists{t'}\in[t+a,t+b),\\
~(\AgtTrajNoT{},t') \models \formula{}{2} ~\mbox{and} &~~\forall{t''}\in{[t+a,t')}\mbox{,}
~(\AgtTrajNoT{},t'') \models \formula{}{1}.
\end{split}
\]
\end{definition}

Robust semantics quantifies the degree at which a certain trajectory satisfies or violates an MTL formula $\formula{}{}$ at time step $t$. The robustness degree of a an MTL formula
$\formula{}{}$ with respect to a trajectory $\AgtTrajNoT{}$ at time step $t$ is given by $\robustness(\AgtTrajNoT{},\formula{}{},t)$, where $\robustness(\AgtTrajNoT{},\formula{}{},t)$ can be calculated
recursively via the robust semantics as follows. 
\vspace{-5pt}
\begin{align*}
    \begin{split}
    \robustness(\AgtTrajNoT{},\pi,t)  &= \mathbf{Dist}_{{f}}(\AgtTraj{\timeIndex}{},\mathcal{O}(\pi)),\\
    \robustness(\AgtTrajNoT{},\lnot\formula{},t)  &= -\robustness(\AgtTrajNoT{},\formula{}{},t),\\
    \robustness(\AgtTrajNoT{},\formula{}{1}\wedge\formula{}{2},t)  &= \min(\robustness(\AgtTrajNoT{},\formula{}{1},t),\robustness(\AgtTrajNoT{},\formula{}{2},t)),\\
    \robustness(\AgtTrajNoT{},\formula{}{1}\luntil_{\lbrack{a},b)}\formula{}{2},t) &=
    \max\limits_{t+a \leq t' < t+b}(  \min( \robustness(\AgtTrajNoT{},\formula{}{2},t'), \\
    &\min\limits_{t+a \leq t'' < t'} \robustness(\AgtTrajNoT{},\formula{}{1},t'') )).
    \end{split}
\end{align*}

\subsection{Estimation of the states Using Kalman Filter}\label{subsec: kalman}
In this subsection, we review the Kalman filter equations that are used to calculate the optimal estimates of the states using given noisy outputs. 
The Kalman filter equations are as follow \cite{Kalman}.
\vspace{-5pt}
\begin{align}
\label{eq:Est}
    \EstSt{\timeIndex}{}&=\AugA{}{}\EstSt{\timeIndex-1}{}+\AugB{}{}\ControlInput{\timeIndex-1}{}\\
    &~~+\KalmanGain{\timeIndex}{}(\BoldNoisyy{\timeIndex}{}-\AugA{}{}\EstSt{\timeIndex-1}{}-\AugB{}{}\ControlInput{\timeIndex-1}{}),
    \\
    \label{eq:KalmanGain}
    \KalmanGain{\timeIndex}{}&=\StCovarianceMat{\timeIndex-1}{}(\StCovarianceMat{\timeIndex-1}{}+\NoiseCo{}{})^{-1},\\
    \label{eq:ErrorCo}
    \StCovarianceMat{\timeIndex}{}&=(I_{|\AgentSet{}|}-\KalmanGain{\timeIndex}{})\StCovarianceMat{\timeIndex-1}{},
    \end{align}
where $\EstSt{\timeIndex}{}=[({\EstStAgt{\timeIndex}{1}})^{T},({\EstStAgt{\timeIndex}{2}})^{T},...,({\EstStAgt{\timeIndex}{|\AgentSet{}|}})^{T}]^{T}$ is the vector of the estimated states of $|\AgentSet{}|$ agents at time step $\timeIndex$ and ${\EstStAgt{\timeIndex}{\agtID}}=[\EstStAgt{\timeIndex}{\agtID,\dimcounter_{1}},\EstStAgt{\timeIndex}{\agtID,\dimcounter_{2}},...,\EstStAgt{\timeIndex}{\agtID,\dimcounter_{|\mathcal{D}|}}]^{T}$ (where $\agtID\in{\AgentSet{}}$) denotes the $|\mathcal{D}|$-dimensional estimated state of agent $\agtID$ at time step $\timeIndex$, $\KalmanGain{\timeIndex}{}$ is the Kalman gain matrix at time step $\timeIndex$ and is a $|\AgentSet{}|\times|\AgentSet{}|$ matrix, $\StCovarianceMat{\timeIndex}{}$ is the covariance matrix of state estimation error at time step $\timeIndex$ and is a $|\AgentSet{}|\times|\AgentSet{}|$ matrix, $\NoiseCo{}{}=\mathbb{E}(\noise{}{}{\noise{}{}}^T)$ is the covariance matrix of the noise vector $\noise{}{}$. Also, we assume that each agent $\agtID$ knows that $\noise{}{}$ conforms to a Gaussian distribution with 0 mean and covariance matrix $\NoiseCo{}{}$. Also, each agent $\agtID$ knows that $\mathbb{E}(\Vert{\noise{}{}}\Vert^2)\leq{|\AgentSet{}|v_{\text{max}}}$ and $\mathbb{E}(\Vert{\EstStAgt{0}{\agtID}}-\St{0}{\agtID}\Vert^2)\leq{\StNoT{\textrm{max}}}$ with $v_{\text{max}}$ and ${\StNoT{\textrm{max}}}$ being arbitrary values \cite{RuixuanDistributed}.
 
\begin{table}[hbt!]
\addtolength{\tabcolsep}{-5pt}
\begin{center}
\begin{tabular}{ c | c }
\toprule
 \textbf{Notation} & \textbf{{Definition}} \\
 \hline
 $\AgtTraj{t}{i}$ & actual agent-level trajectory \\
 & of agent $i$ at time step $t$
 \\
  \hline
  $\AgtTraj{t}{i,d_m}$ & $m$-th dimension of 
  \\
  & actual agent-level trajectory \\
 & of agent $i$ at time step $t$
 \\
 \hline
 $\BoldSt{\timeIndex}{}$ & $|\AgentSet{}|\times{1}$ vector containing the actual    \\ 
 & states of agents at time step $\timeIndex$\\
 \hline
  $\ActGM{t}{}$ & actual system-level   \\
  & trajectory at time step $t$
  \\
  \hline
  $\EstStAgt{\timeIndex}{\agtID}$ & estimated state of agent $\agtID$ at time step $\timeIndex$\\
  \hline
 $\EstSt{\timeIndex}{}$ & $|\AgentSet{}|\times{1}$ vector containing the estimated    \\
 &states of the agents at time step $t$
 \\
 \hline
 $\EstSt{\timeIndex}{\agtID}$ &$|\AgentSet{}|\times{1}$ vector containing the estimated states of the  \\
 & agents estimated by agent $i$ at time step $t$
 \\
 \hline
 $\GMnonBold{\timeIndex}{\agtID}$ & estimated system-level trajectory
 \\
  & estimated by agent $i$ at time step $t$
  \\
 \hline
 $\GMnonBold{\timeIndex}{\agtID,d_m}$ 
 & $m$-th dimension of \\
 & the estimated system-level trajectory
  \\
  & estimated by agent $i$ at time step $t$
  \\
 \hline
 $\GMori{\timeIndex}{}$ & $|\AgentSet{}|\times{1}$ vector containing the estimated system-level \\ &
 trajectories at time step $t$ estimated by $|\AgentSet{}|$ agents
 \\
 \hline
 $\GM{\timeIndex}{\agtID}$ & $|\AgentSet{}|\times{1}$ vector containing the estimate of $\GMori{\timeIndex}{}$\\
 & estimated by agent $i$ at time step $t$
 \\
 \hline
 $\Noisyy{\timeIndex}{\agtID}$ & noisy output of
 agent $i$ at time step $t$
 \\
 \hline
 $\Boldy{t}{}$ & $|\AgentSet{}|\times{1}$ vector containing the outputs \\
 & of the agents at time step $t$
 \\
 \hline
 $\BoldNoisyy{t}{i}$ & $|\AgentSet{}|\times{1}$ vector containing the noisy outputs \\
 & of the agents received through\\ & communication or calculated by agent $i$
\\
 $\NonBoldControlInput{\timeIndex}{\agtID}$& control input of agent $i$ at time\\
 &  step $t$ calculated by agent $i$
 \\
 \hline
 $\ControlInput{\timeIndex}{}$ & $|\AgentSet{}|\times{1}$ vector containing the control inputs \\
 &  of $|\AgentSet{}|$ agents at time step $t$\\
 \hline
 $\ControlInput{\timeIndex}{\agtID}$ & $|\AgentSet{}|\times{1}$ vector containing the control inputs \\
 & calculated by agent $i$ at time step $t$\\
 \bottomrule
\end{tabular}
\caption{{List of important notations.}}
\label{table:1}
\end{center}
\end{table}

\section{Estimation of system-level Trajectory in Multi-Agent Systems with MTL Specifications}\label{sec: Est of robot swarm}
In this section, we review a method to estimate the system-level trajectory $\ActGMNoT{}$ in a situation where each agent $\agtID$ shares only its noisy outputs $\Tilde{y}_{\agtID}$ with its neighbors in an MAS. In this MAS, each agent $\agtID$ has \textit{asynchronous communication} with only its neighboring agents, i.e., only two agents can communicate at each time step $\timeIndex$ and the probability of each agent $\agtID$ being \textit{active} at time step $\timeIndex$ is $\frac{1}{|\AgentSet{}|}$ (here active means agent $\agtID$ can initiate communication with another agent). In other words, if agent $\agtID$ is not active at time step $\timeIndex$, then it can not initiate communication with its neighboring agent.

Intuitively, in order to collaborate in satisfying a system-level task $\formula{}{\textrm{s}}$, each agent $\agtID$ needs to have access to the system-level trajectory $\ActGMNoT{}$. However, in the situation where each agent $\agtID$ has only access to the noisy outputs of its neighboring agents, each agent $\agtID$ needs to have an estimate of the system-level trajectory $\ActGMNoT{}$ while taking into consideration that the probability of the satisfaction of the MTL specification $\formula{}{\textrm{s}}$ is higher than a given minimum value $\MinConf$.

In what follows, we review a method by which each agent can estimate the system-level trajectory $\ActGMNoT{}$ in a distributed manner \cite{RuixuanDistributed}. The main idea in this method is that each agent is able to compute an estimate of the actual system-level trajectory $\ActGMNoT{}$ such that the estimation error converges to zero when the time $\timeIndex$ goes to infinity. We assume that at time step $\timeIndex-1$, agents $\agtID$ and $\counterl$ have computed the estimated system-level trajectory $\GMnonBold{\timeIndex-1}{\agtID}$ and $\GMnonBold{\timeIndex-1}{\counterl}$, respectively. At time step $\timeIndex$, agents $\agtID$ and $\counterl$ communicate with each other and update their estimates of the system-level trajectory $\ActGM{\timeIndex}{}$ using Eqs. \eqref{eq:Est1} and \eqref{eq:Est2}, respectively.
\vspace{-10pt}
\begin{equation}\label{eq:Est1}
    \GMnonBold{\timeIndex}{\agtID}=\frac{1}{2}(\GMnonBold{\timeIndex-1}{\agtID}+\GMnonBold{\timeIndex-1}{\counterl})+\EstStAgt{\timeIndex}{\agtID}-\EstStAgt{\timeIndex-1}{\agtID},
\end{equation}
\begin{equation}\label{eq:Est2}
    \GMnonBold{\timeIndex}{\counterl}=\frac{1}{2}(\GMnonBold{\timeIndex-1}{\agtID}+\GMnonBold{\timeIndex-1}{\counterl})+\EstStAgt{\timeIndex}{\counterl}-\EstStAgt{\timeIndex-1}{\counterl},
\end{equation}
and other agents update their estimates of the system-level trajectory using Eq. \eqref{eq:EstAll}
\begin{equation}\label{eq:EstAll}
    \GMnonBold{\timeIndex}{\counterk}=\GMnonBold{\timeIndex-1}{\counterk}+\EstStAgt{\timeIndex}{\counterk}-\EstStAgt{\timeIndex-1}{\counterk},~\counterk\in\AgentSet{}~\text{and} ~\counterk\neq{\agtID,\counterl}.
\end{equation}
We can reformulate the equations of the update of the estimated system-level trajectory by the agents in the vector form using Eq. \eqref{eq:EstVector}
\begin{equation}\label{eq:EstVector}
    \GMori{\timeIndex}{}=\TopMatrix\GMori{\timeIndex-1}{}+\EstSt{\timeIndex}{}-\EstSt{\timeIndex-1}{},
\end{equation}
where $\GMori{\timeIndex}{}=[(\GMnonBold{\timeIndex}{1})^T,(\GMnonBold{\timeIndex}{2})^T,...,(\GMnonBold{\timeIndex}{|\AgentSet{}|})^T]^{T}$ is the vector containing the estimates of the actual system-level trajectory $\ActGM{\timeIndex}{}$ made by the $|\AgentSet{}|$ agents at time step $\timeIndex$, and $\GMnonBold{\timeIndex}{\agtID}=[\GMnonBold{\timeIndex}{\agtID,d_1},...,\GMnonBold{\timeIndex}{\agtID,d_{|\mathcal{D}|}}]^T$; matrix $\TopMatrix$ is the system matrix to update $\GMori{\timeIndex}{}$ and is obtained by solving the following optimization problem \cite{BoydGossip}\cite{RuixuanDistributed}.
\vspace{-5pt}
\begin{align}\label{prob:optmizeV}
\argmin\limits_{{\TopMatrix}}~~~&g,\\
\text{subject to}~~~&\ProbMatrix{il}\geq{0},~\ProbMatrix{i,l}={0},~\text{if}~e_{i,l}\not\in{\mathcal{E}},\\
&\TopMatrix=\frac{1}{|\AgentSet{}|}\sum\limits_{i=1}^{|\AgentSet{}|}\sum\limits_{l=1}^{|\AgentSet{}|}\ProbMatrix{il}(\Tilde{\TopMatrix})_{il},~\TopMatrix-\frac{1}{|\AgentSet{}|}\boldsymbol{1}\boldsymbol{1}^{T}\preceq{g}I_{|\AgentSet{}|},\\
&\sum\limits_{j=1}^{|\AgentSet{}|}\ProbMatrix{il}=1, \forall{i}.
\end{align}
In the optimization problem \eqref{prob:optmizeV}, $\ProbMatrix{}$ is a  $|\AgentSet{}|\times{|\AgentSet{}|}$ matrix  where each entry $\ProbMatrix{\agtID\counterl}$ represents the probability that agent $\agtID$ communicates with agent $\counterl$. As we mentioned earlier, the probability of agent $\agtID$ to be active is $\frac{1}{|\AgentSet{}|}$; thus matrix $\TopMatrix$ has a probability of $\frac{1}{|\AgentSet{}|}\ProbMatrix{il}$ to be equal to $(\Tilde{\TopMatrix})_{\agtID\counterl}=I_{|\AgentSet{}|}-\frac{(e_{\agtID}-e_{\counterl})(e_{\agtID}-e_{\counterl})}{2}$, where $I_{|\AgentSet{}|}$ is a $|\AgentSet{}|\times{}|\AgentSet{}|$ identity matrix and $e_{\agtID}=[0,...,1,...,0]^{T}$ is a $|\AgentSet{}|\times{1}$ vector with the $\agtID$-th entry to be 1 and zero in all other entries \cite{BoydGossip}\cite{RuixuanDistributed}. It can be shown that the expected value of the matrix $\TopMatrix[\timeIndex]$, denoted by $\mathbb{E}(\TopMatrix[\timeIndex])$, is constant at different time steps $\timeIndex$ \cite{BoydGossip}. $\boldsymbol{1}$ a is $|\AgentSet{}|\times{1}$ vector with all the entries to be 1, and $\mathcal{E}$ is the set of the edges of the undirected graph $G$ of the MAS.


In order to use $\GMori{\timeIndex}{}$ for synthesizing controller inputs for the MAS, we need to address two important matters: (1) it is crucial to guarantee that the estimation error of the vector of the estimated system-level trajectories $\GMori{\timeIndex}{}$ (in comparison with the actual system-level trajectory $\ActGM{\timeIndex}{}$) converges to zero when $\timeIndex\rightarrow\infty$, and (2) we need to guarantee that the actual system-level trajectory $\ActGMNoT{}$ satisfies $\formula{}{\textrm{s}}$ with a probability higher than a minimum value $\MinConf$ given that agents do not have access to the actual system-level trajectory. Therefore, we need to provide the guarantee of $\ActGM{\timeIndex}{}$ satisfying $\formula{}{\textrm{s}}$ using the vector of the estimated system-level trajectories $\GMori{\timeIndex}{}$. In what follows, we provide Theorem \ref{the:EstErr} and Lemma \ref{lem:MonRul} that address the two mentioned issues, respectively \cite{RuixuanDistributed}. 
\vspace{-7pt}
\begin{theorem}\label{the:EstErr}
The estimation error $\mathbb{E}(\Vert{\GMori{\timeIndex}{}-\ActGM{\timeIndex}{}\boldsymbol{1}}\Vert_{\infty})$ converges to zero when $\timeIndex\rightarrow\infty$ for an MAS consisting of $|\AgentSet{}|$ agents, where $\GMori{\timeIndex}{}$ is the vector of estimates of the actual system-level trajectory $\ActGM{\timeIndex}{}$ at time step $\timeIndex$ calculated by $|\AgentSet{}{}|$ agents. 
\end{theorem}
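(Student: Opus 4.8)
The plan is to decompose the estimation error onto the consensus direction $\boldsymbol{1}$ and its orthogonal complement and to show that each piece vanishes in expectation. Introducing the averaging projection $J:=\frac{1}{|\AgentSet{}|}\boldsymbol{1}\boldsymbol{1}^{T}$, I would write $\GMori{\timeIndex}{}-\ActGM{\timeIndex}{}\boldsymbol{1}=J(\GMori{\timeIndex}{}-\ActGM{\timeIndex}{}\boldsymbol{1})+(I_{|\AgentSet{}|}-J)\GMori{\timeIndex}{}$, where the second summand used $J\boldsymbol{1}=\boldsymbol{1}$. Since $\Vert\cdot\Vert_{\infty}\leq\Vert\cdot\Vert_{2}$ on a finite-dimensional space and $\mathbb{E}\Vert\cdot\Vert_{2}\leq\sqrt{\mathbb{E}\Vert\cdot\Vert_{2}^{2}}$ by Jensen's inequality, it suffices to drive both the consensus component and the disagreement component $\boldsymbol{\delta}_{\timeIndex}:=(I_{|\AgentSet{}|}-J)\GMori{\timeIndex}{}$ to zero in expected squared $2$-norm.

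For the consensus component I would exploit that every realization of the gossip matrix $\TopMatrix$ is doubly stochastic, so $\boldsymbol{1}^{T}\TopMatrix=\boldsymbol{1}^{T}$. Left-multiplying the update \eqref{eq:EstVector} by $\boldsymbol{1}^{T}$ makes the gossip term telescope, and, with the natural initialization of each estimate at the agent's own state estimate, yields $\frac{1}{|\AgentSet{}|}\boldsymbol{1}^{T}\GMori{\timeIndex}{}=\frac{1}{|\AgentSet{}|}\boldsymbol{1}^{T}\EstSt{\timeIndex}{}$ for all $\timeIndex$. Subtracting $\ActGM{\timeIndex}{}=\frac{1}{|\AgentSet{}|}\boldsymbol{1}^{T}\BoldSt{\timeIndex}{}$ then identifies the consensus component with the mean Kalman estimation error $\frac{1}{|\AgentSet{}|}\boldsymbol{1}^{T}(\EstSt{\timeIndex}{}-\BoldSt{\timeIndex}{})\boldsymbol{1}$, which tends to zero because the recursion \eqref{eq:Est}--\eqref{eq:ErrorCo} drives $\mathbb{E}\Vert\EstStAgt{\timeIndex}{\agtID}-\St{\timeIndex}{\agtID}\Vert^{2}$ to zero under the standing bounds $\mathbb{E}(\Vert\noise{}{}\Vert^{2})\leq|\AgentSet{}|v_{\text{max}}$ and $\mathbb{E}(\Vert\EstStAgt{0}{\agtID}-\St{0}{\agtID}\Vert^{2})\leq\StNoT{\textrm{max}}$.

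For the disagreement component I would use $\TopMatrix J=J\TopMatrix=J$ together with double stochasticity to reduce the update to the contracted recursion $\boldsymbol{\delta}_{\timeIndex}=(\TopMatrix-J)\boldsymbol{\delta}_{\timeIndex-1}+(I_{|\AgentSet{}|}-J)(\EstSt{\timeIndex}{}-\EstSt{\timeIndex-1}{})$, in which the homogeneous map contracts: the mean matrix is constant across time and obeys the semidefinite bound $\TopMatrix-\frac{1}{|\AgentSet{}|}\boldsymbol{1}\boldsymbol{1}^{T}\preceq gI_{|\AgentSet{}|}$ minimized in problem \eqref{prob:optmizeV}, so the corresponding second-moment gain is some $g<1$. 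Unrolling the recursion and taking expectations would bound $\mathbb{E}\Vert\boldsymbol{\delta}_{\timeIndex}\Vert_{2}$ by the geometric convolution $\sum_{\tau}g^{\,\timeIndex-\tau}\,\mathbb{E}\Vert(I_{|\AgentSet{}|}-J)(\EstSt{\tau}{}-\EstSt{\tau-1}{})\Vert_{2}$.

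The main obstacle is this forcing term. The increment $\EstSt{\timeIndex}{}-\EstSt{\timeIndex-1}{}$ splits into the difference of successive Kalman errors, which vanishes with the filter, and the true state increment $\BoldSt{\timeIndex}{}-\BoldSt{\timeIndex-1}{}$, whose disagreement part persists for as long as the agents keep moving. I would therefore argue that the forcing becomes negligible --- because the closed-loop trajectory settles so that $\BoldSt{\timeIndex}{}-\BoldSt{\timeIndex-1}{}\to\boldsymbol{0}$, or by invoking the corresponding decay hypothesis of \cite{RuixuanDistributed} --- whence the geometric convolution, and hence $\mathbb{E}\Vert\boldsymbol{\delta}_{\timeIndex}\Vert_{2}$, tends to zero. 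Combining the two vanishing components through the triangle inequality gives $\mathbb{E}(\Vert\GMori{\timeIndex}{}-\ActGM{\timeIndex}{}\boldsymbol{1}\Vert_{\infty})\to0$. The recurring delicacy is that $\TopMatrix$ is random, so each contraction estimate must be carried out in expectation using the constant mean $\mathbb{E}(\TopMatrix[\timeIndex])$ and its second-moment spectral gap rather than a deterministic matrix norm.
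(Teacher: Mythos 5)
Your architecture --- splitting the error along the averaging projection $J=\frac{1}{|\AgentSet{}|}\boldsymbol{1}\boldsymbol{1}^{T}$, exploiting double stochasticity of the gossip realizations so the consensus component reduces to the mean Kalman error, and contracting the disagreement component via the spectral-gap condition $\TopMatrix-\frac{1}{|\AgentSet{}|}\boldsymbol{1}\boldsymbol{1}^{T}\preceq g I_{|\AgentSet{}|}$ from problem \eqref{prob:optmizeV} --- is the standard gossip-consensus route, and it is worth noting that the paper itself gives no argument at all: its ``proof'' is a pointer to \cite{RuixuanDistributed}, whose analysis your sketch plausibly parallels in structure. Two of your ingredients are sound and can be made rigorous: the Kalman recursions \eqref{eq:KalmanGain}--\eqref{eq:ErrorCo} imply $\StCovarianceMat{\timeIndex}{}^{-1}=\StCovarianceMat{\timeIndex-1}{}^{-1}+(\NoiseCo{}{})^{-1}$, so the filter error decays like $1/\timeIndex$ (consistent with $\delta_{\textrm{max}}(\timeIndex)$ in Corollary \ref{col:errorbound}); and the expected contraction is legitimate because each realization $(\Tilde{\TopMatrix})_{il}$ is a symmetric projection, whence $\mathbb{E}\bigl[(\TopMatrix-J)^{T}(\TopMatrix-J)\bigr]=\mathbb{E}[\TopMatrix]-J\preceq gI_{|\AgentSet{}|}$ --- a step you only gesture at but which is exactly what lets you replace a per-realization norm bound by a second-moment one.

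The genuine gap is the forcing term you flag and then assume away. The disagreement part of $\EstSt{\timeIndex}{}-\EstSt{\timeIndex-1}{}$ contains the true state increment $\BoldSt{\timeIndex}{}-\BoldSt{\timeIndex-1}{}=(\AugA{}{}-I)\BoldSt{\timeIndex-1}{}+\AugB{}{}\ControlInput{\timeIndex-1}{}$, and the only standing hypothesis on the inputs is $\Vert u\Vert_{\infty}\leq u_{\textrm{max}}$; neither the theorem statement nor Problem \ref{prob:Distcontroller} assumes the closed loop settles. Positing $\BoldSt{\timeIndex}{}-\BoldSt{\timeIndex-1}{}\to\boldsymbol{0}$ is moreover circular in this paper's pipeline: the convergence guarantee is consumed \emph{inside} the controller synthesis (through $\EstErrorBound{\timeIndex}{}$ and Lemma \ref{lem:MonRul}) before any settling behavior of the synthesized closed loop could be established. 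Without that extra hypothesis, your geometric convolution yields only $\limsup_{\timeIndex\to\infty}\mathbb{E}\Vert\boldsymbol{\delta}_{\timeIndex}\Vert_{2}\lesssim\sqrt{2|\AgentSet{}|}\,u_{\textrm{max}}/(1-\sqrt{g})$ --- boundedness of the estimation error, not convergence to zero. This is visible in the paper's own Corollary \ref{col:errorbound}, where the non-decaying term $2|\AgentSet{}|(u_{\textrm{max}})^{2}$ sits inside the convolution and the bound vanishes (if it does) only through the leading $\lambda^{\timeIndex}$ factor, i.e., through structure your unrolled estimate does not reproduce: in your bound that persistent term appears additively, where it does not vanish. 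So as written your argument proves a weaker statement than the theorem, and closing the gap requires either making the decay hypothesis on the increments explicit as an assumption or reproducing the finer bookkeeping of \cite{RuixuanDistributed}.
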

\vspace{-16pt}

\begin{proof}
See \cite{RuixuanDistributed}
\end{proof}

\vspace{-7pt}

We can provide an upper bound for the estimation error $\mathbb{E}(\Vert{\GMori{\timeIndex}{}-\ActGM{\timeIndex}{}\boldsymbol{1}}\Vert_{\infty})$ using Corollary \ref{col:errorbound}. Here we assume that each agent knows $\mathbb{E}(\Vert{\GM{0}{}-\overline{{\zeta}}[{0}]\boldsymbol{1}}\Vert_{\infty})\leq{\zeta_{\text{max}}}$, where $\overline{{\zeta}}[0]=\frac{1}{|\AgentSet{}|}\sum\limits^{|\AgentSet{}|}_{\agtID=1}\zeta_{\agtID}[{0}]$ and $\zeta_{\text{max}}$ is an arbitrary value.

\vspace{-5pt}
\begin{corollary}\label{col:errorbound} 
For the estimation error $\mathbb{E}(\Vert{\GMori{\timeIndex}{}-\ActGM{\timeIndex}{}\boldsymbol{1}}\Vert_{\infty})$ at time step $\timeIndex$, we have the upper bound defined as
\vspace{-12pt}
    
\begin{align}\label{eq:errorbound}
    \EstErrorBound{\timeIndex}{}=\lambda^{\timeIndex}\sqrt{|\AgentSet{}|}\zeta_{\text{max}}\mathcal{L}_1\sum\limits^{\timeIndex}_{\counterk=1}\lambda^{\timeIndex-\counterk}~~~~~~~~~~~~~~~~~~~~~~~~~~~\\
    \sqrt{\delta_{\text{max}}(\counterk)+\delta_{\text{max}}(\counterk-1)+2|\AgentSet{}|(u_{\text{max}})^2}+\mathcal{L}_2\sqrt{\delta_{max}({\timeIndex})},
\end{align}
where $\lambda$ is the second largest eigenvalue of matrix $\TopMatrix$, $\mathcal{L}_1$ and $\mathcal{L}_2$ are two Lipschitz constants and $\delta_{\textrm{max}}(\timeIndex)=\frac{|\AgentSet{}|^2s_{\textrm{max}}v_{\textrm{max}}}{v_{\textrm{max}}+\timeIndex{s_{\textrm{max}}}}$.
\end{corollary}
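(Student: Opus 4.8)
The plan is to upgrade the qualitative convergence of Theorem~\ref{the:EstErr} to the explicit rate \eqref{eq:errorbound} by unrolling the linear recursion \eqref{eq:EstVector} and controlling two effects separately: the \emph{disagreement} of the individual estimates $\GMnonBold{\timeIndex}{\agtID}$ around their running average, and the \emph{tracking error} of that average against the true system-level trajectory $\ActGM{\timeIndex}{}$. Write $e[\timeIndex]:=\GMori{\timeIndex}{}-\ActGM{\timeIndex}{}\boldsymbol{1}$ and $\overline{\zeta}[\timeIndex]:=\tfrac{1}{|\AgentSet{}|}\boldsymbol{1}^{T}\GMori{\timeIndex}{}$, so that $e[\timeIndex]=\big(\GMori{\timeIndex}{}-\overline{\zeta}[\timeIndex]\boldsymbol{1}\big)+\big(\overline{\zeta}[\timeIndex]-\ActGM{\timeIndex}{}\big)\boldsymbol{1}$. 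Every realization of the gossip matrix feasible for \eqref{prob:optmizeV} is doubly stochastic (each $(\Tilde{\TopMatrix})_{\agtID\counterl}$ is a symmetric pairwise-averaging projection), so $\TopMatrix\boldsymbol{1}=\boldsymbol{1}$ and $\boldsymbol{1}^{T}\TopMatrix=\boldsymbol{1}^{T}$; left-multiplying \eqref{eq:EstVector} by $\tfrac1{|\AgentSet{}|}\boldsymbol{1}^{T}$ then gives $\overline{\zeta}[\timeIndex]=\overline{\zeta}[\timeIndex-1]+\tfrac1{|\AgentSet{}|}\boldsymbol{1}^{T}(\EstSt{\timeIndex}{}-\EstSt{\timeIndex-1}{})$, which telescopes so that — up to the initial offset controlled by the standing assumption $\mathbb{E}\Vert\GM{0}{}-\overline{\zeta}[0]\boldsymbol{1}\Vert_{\infty}\le\zeta_{\text{max}}$ — the tracking error equals the average Kalman estimation error $\tfrac1{|\AgentSet{}|}\sum_{\agtID}(\EstStAgt{\timeIndex}{\agtID}-\St{\timeIndex}{\agtID})$.

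For the disagreement vector $d[\timeIndex]:=\GMori{\timeIndex}{}-\overline{\zeta}[\timeIndex]\boldsymbol{1}=(I_{|\AgentSet{}|}-\tfrac1{|\AgentSet{}|}\boldsymbol{1}\boldsymbol{1}^{T})\GMori{\timeIndex}{}$, subtracting the scalar recursion for $\overline{\zeta}$ from \eqref{eq:EstVector} and using $\TopMatrix\boldsymbol{1}=\boldsymbol{1}$ gives $d[\timeIndex]=\TopMatrix\,d[\timeIndex-1]+(I_{|\AgentSet{}|}-\tfrac1{|\AgentSet{}|}\boldsymbol{1}\boldsymbol{1}^{T})(\EstSt{\timeIndex}{}-\EstSt{\timeIndex-1}{})$. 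Since $d[\timeIndex-1]\perp\boldsymbol{1}$ and, on that subspace, the spectral gap of the expected gossip matrix makes the dynamics contract geometrically at rate $\lambda$ (the second-largest eigenvalue of $\TopMatrix$, which the objective of \eqref{prob:optmizeV} minimizes through the constraint $\TopMatrix-\tfrac1{|\AgentSet{}|}\boldsymbol{1}\boldsymbol{1}^{T}\preceq gI_{|\AgentSet{}|}$, and which governs the mean-square contraction because each realization $P$ of the gossip matrix satisfies $P^{T}P=P$), taking expectations and iterating from $\timeIndex=0$ yields $\mathbb{E}\Vert d[\timeIndex]\Vert\le\lambda^{\timeIndex}\,\mathbb{E}\Vert d[0]\Vert+\sum_{\counterk=1}^{\timeIndex}\lambda^{\timeIndex-\counterk}\,\mathbb{E}\Vert\EstSt{\counterk}{}-\EstSt{\counterk-1}{}\Vert$, with $\mathbb{E}\Vert d[0]\Vert\le\sqrt{|\AgentSet{}|}\,\zeta_{\text{max}}$ after converting the $\ell_\infty$ bound.

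It remains to size the two driving ``inputs''. For the step increment I would split $\EstSt{\counterk}{}-\EstSt{\counterk-1}{}=(\EstSt{\counterk}{}-\BoldSt{\counterk}{})+(\BoldSt{\counterk}{}-\BoldSt{\counterk-1}{})+(\BoldSt{\counterk-1}{}-\EstSt{\counterk-1}{})$: the first and third are the Kalman estimation errors at steps $\counterk$ and $\counterk-1$, whose mean-squared magnitude is propagated by the covariance recursion \eqref{eq:KalmanGain} and \eqref{eq:ErrorCo}; solving that (diagonal, hence scalar) Riccati recursion with $\mathbb{E}\Vert\EstStAgt{0}{\agtID}-\St{0}{\agtID}\Vert^{2}\le s_{\text{max}}$ and noise covariance bounded via $\mathbb{E}\Vert\noise{}{}\Vert^{2}\le|\AgentSet{}|v_{\text{max}}$ gives exactly $\delta_{\text{max}}(\timeIndex)=\frac{|\AgentSet{}|^{2}s_{\text{max}}v_{\text{max}}}{v_{\text{max}}+\timeIndex s_{\text{max}}}$, while the middle term is bounded through the dynamics \eqref{eq:swarm} and $\Vert u\Vert_{\infty}\le u_{\text{max}}$, producing the $2|\AgentSet{}|(u_{\text{max}})^{2}$ summand; a Cauchy--Schwarz/Jensen step then yields $\mathbb{E}\Vert\EstSt{\counterk}{}-\EstSt{\counterk-1}{}\Vert\le\sqrt{\delta_{\text{max}}(\counterk)+\delta_{\text{max}}(\counterk-1)+2|\AgentSet{}|(u_{\text{max}})^{2}}$. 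The tracking term is handled the same way: $\mathbb{E}\Vert(\overline{\zeta}[\timeIndex]-\ActGM{\timeIndex}{})\boldsymbol{1}\Vert_{\infty}\le\mathcal{L}_{2}\sqrt{\delta_{\text{max}}(\timeIndex)}$ by Jensen on the average estimation error. Adding the disagreement and tracking bounds and folding the $\ell_\infty$-to-$\ell_2$ conversions together with the Lipschitz constants that relate per-agent state estimates to estimated system-level trajectories into $\mathcal{L}_{1}$ and $\mathcal{L}_{2}$ gives \eqref{eq:errorbound}. The step I expect to be the main obstacle is precisely this last bookkeeping: pushing the two independent randomness sources — the asynchronous gossip schedule and the Gaussian privacy noise entering through the Kalman filter — through the nested expectations while keeping the contraction factor the \emph{expected} rate $\lambda$ rather than a per-realization quantity, and matching the constants exactly to the stated $\delta_{\text{max}}(\timeIndex)$.
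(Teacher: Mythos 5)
The paper does not actually prove this corollary: it states the bound and defers entirely to Corollary~4 of \cite{RuixuanDistributed}, so there is no in-paper argument to compare yours against line by line. Your reconstruction --- splitting $\GMori{\timeIndex}{}-\ActGM{\timeIndex}{}\boldsymbol{1}$ into a disagreement component contracted by the gossip matrix and a tracking component equal to the average Kalman estimation error, then sizing the driving increments via the Riccati-type bound $\delta_{\text{max}}$ and the input bound $u_{\text{max}}$ --- is the standard analysis for gossip averaging of a time-varying signal and is almost certainly the route the cited corollary takes; the shape of the stated bound (an initial-disagreement term $\lambda^{\timeIndex}\sqrt{|\AgentSet{}|}\zeta_{\text{max}}$, a convolution $\sum_{\counterk}\lambda^{\timeIndex-\counterk}(\cdot)$ against the step increments, and an additive tracking term $\mathcal{L}_2\sqrt{\delta_{\text{max}}(\timeIndex)}$) confirms this, and your additive reading of \eqref{eq:errorbound} is surely the intended one (as typeset, the first factor multiplies the sum, which is presumably a missing ``$+$''). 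The one substantive point your sketch leaves open, and which you correctly flag as the obstacle, is the exponent on the contraction factor: the projection identity for each gossip realization gives mean-square contraction $\mathbb{E}\Vert d[\timeIndex]\Vert^{2}\leq\lambda\,\mathbb{E}\Vert d[\timeIndex-1]\Vert^{2}$ on $\boldsymbol{1}^{\perp}$, which after taking square roots yields a rate $\lambda^{\timeIndex/2}$ rather than the stated $\lambda^{\timeIndex}$; reconciling that exponent, and pinning down exactly where $\mathcal{L}_1$, $\mathcal{L}_2$, and the summand $2|\AgentSet{}|(u_{\text{max}})^{2}$ enter, requires the bookkeeping of the cited proof, which neither the paper nor your sketch carries out.
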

For further details regarding the derivation of the upper bound $\EstErrorBound{\timeIndex}{}$, we kindly refer the reader to Corollary 4 in  \cite{RuixuanDistributed}.

We use the upper bound $\EstErrorBound{\timeIndex}{}$ to provide the guarantee that the probability of $\ActGM{\timeIndex}{}$ satisfying the system-level MTL specification $\formula{}{\textrm{s}}$, referred to as \textit{confidence level}, is higher than a minimum value $\MinConf$; therefore, we provide a set of constraints that each agent $\agtID$ must satisfy recursively \cite{RuixuanDistributed}. 
\begin{lemma}\label{lem:MonRul}
Let $\MinRob\geq{0}$ denote the minimum required robustness degree of $\GMoriNoT{}$ satisfying a given system-level MTL specification $\formula{}{}$ at time step $\timeIndex$, the confidence level of agent $\agtID$ of $\ActGM{\timeIndex}{}$ satisfying $\formula{}{}$ at time step $\timeIndex$, denoted by $\ConfidenceLe{}{\agtID}((\ActGMNoT{},\timeIndex)\models{\formula{}{}})$, must satisfy the following constraints. 
\begin{align*}
   \ConfidenceLe{}{\agtID}((\ActGMNoT{}{}{},\timeIndex)\models{\pi})&\geq{1}-\frac{\EstErrorBound{\timeIndex}{}}{\MinRob},\\
   \ConfidenceLe{}{\agtID}((\ActGMNoT{},\timeIndex)\models{\formula{}{1}\land{\formula{}{2}}})&\geq{\ConfidenceLe{}{\agtID}((\ActGMNoT{},\timeIndex)\models{\formula{}{1}})}\\
   &+{\ConfidenceLe{}{\agtID}((\ActGMNoT{},\timeIndex)\models{\formula{}{2}})}-1,\\
    \ConfidenceLe{}{\agtID}((\ActGMNoT{},\timeIndex)\models{\formula{}{1}\lor{\formula{}{2}}})&\geq{1-\min\{1-\ConfidenceLe{}{\agtID}((\ActGMNoT{},\timeIndex)\models{\formula{}{1}})},\\
    &1-{\ConfidenceLe{}{\agtID}((\ActGMNoT{},\timeIndex)\models{\formula{}{2}})}\},\\
    {\ConfidenceLe{}{\agtID}((\ActGMNoT{},\timeIndex)\models{\lglobally_{[a,b]}\formula{}{}})}&\geq{1-\min\limits_{t'\in[t+a,t+b]}\{1-{\ConfidenceLe{}{\agtID}((\ActGMNoT{},t')\models{\formula{}{}})}},\\
    {\ConfidenceLe{}{\agtID}((\ActGMNoT{},\timeIndex)\models{\leventually_{[a,b]}\formula{}{}})}&\geq{1-\min\limits_{t'\in[t+a,t+b]}\{1-{\ConfidenceLe{}{\agtID}((\ActGMNoT{},t')\models{\formula{}{}})}\}},\\
    {\ConfidenceLe{}{\agtID}((\ActGMNoT{},\timeIndex)\models{\formula{}{1}\luntil_{[a,b]}\formula{}{2}})}&\geq{1-\min\limits_{t'\in[t+a,t+b]}\{1-{\ConfidenceLe{}{\agtID}((\ActGMNoT{},t')\models{\formula{}{2}}})}\\
    &+(\sum\limits^{t'}_{t''=t+a}1-\ConfidenceLe{}{i}((\ActGMNoT{},t'')\models{\formula{}{1}})\}.
\end{align*}

\end{lemma}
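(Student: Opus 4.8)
The plan is to establish all six inequalities simultaneously by structural induction on the MTL formula $\formula{}{}$, mirroring the recursive robust semantics stated above and reducing each case to an elementary probabilistic bound. Throughout, $\ConfidenceLe{}{\agtID}((\ActGMNoT{},\timeIndex)\models{\formula{}{}})$ is read as a lower bound --- computable by agent $\agtID$ from its own estimate $\GM{\timeIndex}{\agtID}$ --- on the true probability (over the differential-privacy noise) that the actual system-level trajectory $\ActGMNoT{}$ satisfies $\formula{}{}$ at time step $\timeIndex$, under the standing hypothesis that agent $\agtID$'s estimate of $\ActGM{\timeIndex}{}$ carries robustness at least $\MinRob$ for $\formula{}{}$ at $\timeIndex$.

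\textbf{Base case (atomic predicate).} By the robust semantics, $\robustness(\ActGMNoT{},\pi,\timeIndex)=\mathbf{Dist}_{f}(\ActGM{\timeIndex}{},\mathcal{O}(\pi))$, and the signed distance is $1$-Lipschitz in its first argument with respect to the $2$-norm. Hence, if agent $\agtID$'s estimate of $\ActGM{\timeIndex}{}$ satisfies $\pi$ at $\timeIndex$ with robustness at least $\MinRob$, the actual trajectory can have negative robustness for $\pi$ at $\timeIndex$ only when the estimation error at that step is at least $\MinRob$; up to a dimension-dependent constant this event is contained in $\{\Vert{\GMori{\timeIndex}{}-\ActGM{\timeIndex}{}\boldsymbol{1}}\Vert_{\infty}\geq\MinRob\}$. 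Markov's inequality together with the expectation bound $\mathbb{E}(\Vert{\GMori{\timeIndex}{}-\ActGM{\timeIndex}{}\boldsymbol{1}}\Vert_{\infty})\leq\EstErrorBound{\timeIndex}{}$ of Corollary \ref{col:errorbound} then yields $\mathbb{P}[(\ActGMNoT{},\timeIndex)\not\models\pi]\leq\EstErrorBound{\timeIndex}{}/\MinRob$, i.e. $\ConfidenceLe{}{\agtID}((\ActGMNoT{},\timeIndex)\models{\pi})\geq 1-\EstErrorBound{\timeIndex}{}/\MinRob$, which is the first displayed constraint.

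\textbf{Inductive step.} Assuming $\formula{}{}$ is in negation normal form --- so that $\lnot$ occurs only in front of predicates and is absorbed into the base case by replacing $\mathcal{O}(\pi)$ with its complement --- I would treat $\land$ and $\lor$ via the Fr\'echet inequalities $\mathbb{P}[A\cap B]\geq\mathbb{P}[A]+\mathbb{P}[B]-1$ and $\mathbb{P}[A\cup B]\geq\max\{\mathbb{P}[A],\mathbb{P}[B]\}=1-\min\{1-\mathbb{P}[A],1-\mathbb{P}[B]\}$, applied to the events $\{(\ActGMNoT{},\timeIndex)\models\formula{}{1}\}$ and $\{(\ActGMNoT{},\timeIndex)\models\formula{}{2}\}$ with the induction hypothesis substituted for the probabilities. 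For the bounded temporal operators I would unfold the robust semantics over the time window: $\lglobally_{[a,b]}\formula{}{}$ amounts to the conjunction of $\formula{}{}$ over $\timeIndex'\in[\timeIndex+a,\timeIndex+b]$ and $\leventually_{[a,b]}\formula{}{}$ to the disjunction, so iterating the conjunction/disjunction bounds across the window produces the stated recursive inequalities; the $\luntil_{[a,b]}$ case follows the same pattern, since satisfaction requires a witness $\timeIndex'$ with $(\ActGMNoT{},\timeIndex')\models\formula{}{2}$ together with $(\ActGMNoT{},\timeIndex'')\models\formula{}{1}$ for every $\timeIndex''\in[\timeIndex+a,\timeIndex')$ --- a disjunction over $\timeIndex'$ (giving the outer $\min$) wrapped around a conjunction over the prefix (giving the inner summation of the terms $1-\ConfidenceLe{}{\agtID}((\ActGMNoT{},\timeIndex'')\models{\formula{}{1}})$). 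Since $\MinRob\geq0$ and every confidence level lies in $[0,1]$, these inequalities are exactly the constraints each agent must enforce recursively.

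\textbf{Main obstacle.} The delicate point is the base case: rigorously reducing the event ``$\ActGMNoT{}$ violates $\pi$ at $\timeIndex$'' to the estimation-error tail event. This requires using that agent $\agtID$'s estimate attains robustness at least $\MinRob$ so that the $1$-Lipschitz property of the signed distance transfers satisfaction from $\GM{\timeIndex}{\agtID}$ to $\ActGM{\timeIndex}{}$, while carefully managing the passage between the per-agent $2$-norm and the $\infty$-norm of the stacked estimation-error vector that is bounded by $\EstErrorBound{\timeIndex}{}$ in Corollary \ref{col:errorbound}, and confirming the Markov step is applied at the correct time step. A secondary technical point is verifying that iterating the two-set Fr\'echet bounds across an interval of length $b-a$ telescopes exactly into the displayed $\min$ and summation expressions without accumulating additional slack.
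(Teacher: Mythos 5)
Your overall route is the same as the paper's: the paper's proof is a one-line appeal to Markov's inequality (deferring all details to Lemma~5 of the cited reference), and your base case --- containing the event $\{(\ActGMNoT{},\timeIndex)\not\models\pi\}$ in the tail event $\{\Vert{\GMori{\timeIndex}{}-\ActGM{\timeIndex}{}\boldsymbol{1}}\Vert_{\infty}\geq\MinRob\}$ via the $1$-Lipschitz signed distance and then applying Markov with the expectation bound $\EstErrorBound{\timeIndex}{}$ of Corollary~\ref{col:errorbound} --- is exactly that argument. Your treatment of $\land$, $\lor$, $\leventually$ and $\luntil$ by Fr\'echet/union bounds, with a disjunction over witnesses wrapped around a union bound on the prefix for $\luntil$, likewise reproduces the intended recursion.

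The genuine gap is the $\lglobally_{[a,b]}$ case, which you flag as a ``secondary technical point'' but which does not go through as you describe. Iterating the two-set Fr\'echet bound $\mathbb{P}[A\cap B]\geq\mathbb{P}[A]+\mathbb{P}[B]-1$ over the window $[t+a,t+b]$ telescopes to $\mathbb{P}[\bigcap_{t'}A_{t'}]\geq 1-\sum_{t'}(1-\mathbb{P}[A_{t'}])$, i.e.\ a \emph{sum} of the deficits, not the displayed $1-\min_{t'}\{1-\ConfidenceLe{}{\agtID}((\ActGMNoT{},t')\models{\formula{}{}})\}=\max_{t'}\ConfidenceLe{}{\agtID}((\ActGMNoT{},t')\models{\formula{}{}})$. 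Indeed $\max_{t'}\mathbb{P}[A_{t'}]$ is never a valid lower bound for $\mathbb{P}[\bigcap_{t'}A_{t'}]$ regardless of dependence structure, so no argument of the kind you propose can yield the inequality as printed; the displayed $\lglobally$ constraint coincides verbatim with the $\leventually$ one and is evidently a misprint for the summed form (note that the $\luntil$ case in the same lemma does use summed deficits $\sum_{t''}(1-\ConfidenceLe{}{\agtID}(\cdot))$ for its inner conjunction, which corroborates this reading). You should either derive $1-\sum_{t'}(1-\ConfidenceLe{}{\agtID}((\ActGMNoT{},t')\models{\formula{}{}}))$ for $\lglobally$ and record the discrepancy with the statement, or drop the claim that your telescoping ``produces the stated recursive inequalities'' in that case. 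The remaining cases of your sketch are sound and match the paper's (delegated) argument.
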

\begin{proof}
The listed constraints can be proven using Markov inequality. For further details, see the proof of Lemma 5 in \cite{RuixuanDistributed}.
\end{proof}

\vspace{-20pt}
\section{Problem Formulation}\label{Problem Formulation}

In this section, we formalize the problem of synthesizing controller inputs for an MAS consisting of $|\AgentSet{}|$ agents in a differentially private manner, where agents are required to collaborate to satisfy a system-level task specified using an MTL specification $\formula{}{\textrm{s}}$ with a minimum probability $\MinConf$, and at the same time, each agent $\agtID$ should satisfy an agent-level MTL specification $\formula{}{\agtID}$. In this MAS, each agent $\agtID$ communicates its noisy output $\Noisyy{\timeIndex}{\agtID}$ with its neighbors in order to keep its actual state $\St{\timeIndex}{\agtID}$ private from its neighboring agents while agent $\agtID$ is aware of its own actual state $\St{\timeIndex}{\agtID}$, and the communication is asynchronous.

 We want to synthesize the controller inputs in a distributed manner, i.e., each agent $\agtID$ synthesizes its own controller input $u_{\agtID}[{\timeIndex}]$ at time step $\timeIndex$. Hence, agent $\agtID$ must calculate $\GMori{\timeIndex}{}$ for computing the controller input $u_{\agtID}[{\timeIndex}]$ while taking into consideration that the probability of the satisfaction of the MTL specification  $\formula{}{\text{s}}$ by the actual system-level trajectory $\ActGM{\timeIndex}{}$ is higher than a minimum value. Hereafter, we denote the vector of the estimated system-level trajectories computed by agent $\agtID$ using $\GM{\timeIndex}{\agtID}$.

 \begin{remark}
At each time step $\timeIndex$, each agent $\agtID$ computes its own estimate of $\ActGM{\timeIndex}{}$, denoted by $\GMnonBold{\timeIndex}{\agtID}$, in addition to computing the estimates of other agents $l\in{\AgentSet{}}-\{\agtID\}$ from $\ActGM{\timeIndex}{}$. Thus, $\GM{\timeIndex}{\agtID}$ represents the vector containing the estimate of agent $\agtID$ from $\ActGM{\timeIndex}{}$, denoted by $\zeta_{i}[t]$, and the estimates of other agents from $\ActGM{\timeIndex}{}$ computed by agent $\agtID$.
\end{remark}

\begin{remark}\label{re:graph}
 In the MAS with $|\AgentSet{}|$ agents, each agent $i$ (1) has access to the time-invariant graph-structure of the MAS givn by the undirected graph $G$ and (2) knows the fact that the asynchronous communication of agent $\agtID$ is only with its neighboring agents $\counterl\in\AgentSet{i}$.
\end{remark}

Now, we formalize the problem of synthesizing control inputs for the control horizon of $2\horizon$ for a MAS with $|\AgentSet{}|$ agents in a distributed and differentially private manner. 
\begin{problem}\label{prob:Distcontroller}
Given an MAS consisting of $|{\AgentSet{}}|$ agents, the privacy parameters  $\DiffEps{}{\agtID}\in[\DiffEps{}{\textrm{min}},\DiffEps{}{\textrm{max}}]$ (where $0<\DiffEps{}{\textrm{min}}<\DiffEps{}{\textrm{max}}$) and $\DiffDelta{}{\agtID}\in[\DiffDelta{}{\textrm{min}},\DiffDelta{}{\textrm{max}}]$ (where $0<\DiffDelta{}{\textrm{min}}<\DiffDelta{}{\textrm{max}}<\frac{1}{2}$), and the objective function $\objective=\sum\limits_{\counterk=1}^{2\horizon}\Vert{\ControlInput{\counterk}{\agtID}}\Vert^2$, synthesize the controller inputs ${\ControlInput{\timeIndex}{\agtID}}$ in the control horizon $2\horizon$ in a distributed and differentially private manner such that the satisfaction of the system-level MTL specification $\formula{}{\textrm{s}}$ with the probability higher than $\MinConf$, i.e., $\ConfidenceLe{}{}((\ActGM{0:\horizon-1}{},0)\models{\formula{}{\textrm{s}}})>\MinConf$ and the satisfaction of the agent-level MTL specification $\formula{}{\agtID}$ by agent $\agtID$ is guaranteed while the objective function $\objective$ is minimized.
\end{problem}
\vspace{-8pt}
Hereafter, the subscript $\agtID$ in the variables $\EstSt{\timeIndex}{\agtID}$, $\ControlInput{\timeIndex}{\agtID}$, $\BoldNoisyy{\timeIndex}{\agtID}$ means that these variables have been calculated or received by agent $\agtID$ at time step $\timeIndex$.

\section{Distributed Differentially Private Receding Horizon Control for Multi-Agent Systems with MTL Specifications  }\label{sec:Implementation}
In this section, we introduce an approach for synthesizing control inputs for an MAS with MTL specifications in a distributed and differentially private manner. Based on the settings of Problem \ref{prob:Distcontroller}, at each time step $\timeIndex$, each agent $\agtID$ should synthesize its own control inputs in the time horizon $[\timeIndex,\timeIndex+2\horizon-1]$ for satisfying the system-level and agent-level specifications $\formula{}{\textrm{s}}$ and $\formula{}{\agtID}$.
For solving Problem \ref{prob:Distcontroller}, we adopt a receding horizon control (RHC) for synthesizing control inputs for satisfying MTL specifications $\formula{}{\textrm{s}}$ and $\formula{}{\agtID}$ while minimizing a given objective function $J$.

In RHC for satisfying  MTL specifications, we incorporate mixed-integer linear programming (MILP) to encode given MTL specifications as constraints in the optimization problem that is solved at each time step $\timeIndex$. In \cite{MPCDonze}, Raman \textit{et. al.} introduce a framework for encoding MTL specifications as MILP constraints, and we incorporate this framework in this paper. 

The proposed optimization formulation for synthesizing control inputs for an  MAS in a differentially private manner can be seen in \eqref{prob:optCon} referred to as $\MILP$.

\begin{align}
{\argmin\limits_{\ControlInput{\timeIndex:\timeIndex+2\horizon-1}{\agtID}}} ~ &
\sum_{\counterk=\timeIndex}^{\timeIndex+2\horizon-1}\Vert{\ControlInput{\counterk}{\agtID}}\Vert^2\end{align}
\begin{align}
    \text{subject to:}~\St{\counterk+1}{\agtID}=&\AugA{}{\agtID*}\St{\counterk}{\agtID}+\AugB{}{\agtID*}u_{\agtID}[{\counterk}],\\
    &\forall{\counterk\in\{\timeIndex,\timeIndex+1,...,\timeIndex+2\horizon-1\}},\\
    \EstSt{\counterk+1}{\agtID}=&\AugA{}{}\EstSt{\counterk}{\agtID}+\AugB{}{}\ControlInput{\counterk}{\agtID}+\KalmanGain{\counterk}{}
    \\
    &(\BoldNoisyy{\counterk+1}{\agtID}-\AugA{}{}\EstSt{\counterk}{\agtID}+\AugB{}{}\ControlInput{\counterk}{\agtID}),\\
    &\forall{\counterk\in\{\timeIndex,\timeIndex+1,...,\timeIndex+2\horizon-1\}},\\
    \GM{\counterk+1}{\agtID}=&\TopMatrix\GM{\counterk}{\agtID}+\EstSt{\counterk+1}{\agtID}-\EstSt{\counterk}{\agtID},\\
    &\forall{\counterk\in\{\timeIndex,\timeIndex+1,...,\timeIndex+2\horizon-1\}},\\
    \BoldNoisyy{\counterk+1}{\agtID}=&\AugC{}{}(\EstSt{\counterk}{\agtID}+\ControlInput{\counterk}{\agtID})
    \\
    &+\frac{\AugC{}{}\AugB{}{}\AugA{}{}}{2}(\ControlInput{\counterk}{\agtID}-\ControlInput{\counterk-1}{\agtID}),\\
    &\forall{\counterk\in\{\timeIndex,\timeIndex+1,...,\timeIndex+2\horizon-1\}},\\
    \robustness[\agtID](\AgtTraj{0:\horizon-1}{\agtID},{\formula{}{{\agtID}}},\STLtimeIndex)>&\PiCon{\STLtimeIndex}{},~\forall{{\STLtimeIndex}\in\{0,1,...,\horizon-1\}},\\
    \ConfidenceLe{}{\agtID}((\ActGM{0:\horizon-1}{\agtID},\STLtimeIndex)\models{\formula{}{\textrm{s}}})>&\Confidence{\STLtimeIndex}{},~\forall{{\STLtimeIndex}\in\{0,1,...,\horizon-1\}},\\
    u_{\text{min}}\leq\ControlInput{\counterk}{\agtID}\leq{u_{\text{min}}}~\forall&{{\counterk}\in\{0,1,...,2\horizon-1\}}.\\
    \label{prob:optCon}
\end{align}

A challenge in incorporating Eq. \eqref{eq:Est}, in the optimization formulation in $\MILP$, is to calculate the vector of the noisy output  $\BoldNoisyy{\timeIndex}{\agtID}$ in the time horizon $[\timeIndex,\timeIndex+2\horizon]$. To overcome this challenge, we exploit the technique of \textit{one-step ahead prediction} of the vector of the noisy output introduced in \cite{JayExtKalman}. Based on the idea introduced in \cite{JayExtKalman}, at time step $\timeIndex$, we can calculate the one-step ahead prediction of vector of the noisy output $\BoldNoisyy{\timeIndex+1}{}$ using Eq. \eqref{eq: Ypredit}.
\vspace{-3pt}
\begin{equation}\label{eq: Ypredit}
    \BoldNoisyy{\timeIndex+1}{}=\AugC{}{}(\EstSt{\timeIndex}{}+\ControlInput{\timeIndex}{})+\frac{\AugC{}{}\AugB{}{}\AugA{}{}}{2}(\ControlInput{\timeIndex}{}-\ControlInput{\timeIndex-1}{})
\end{equation}

In what follows, we explain the details of the proposed approach. As was mentioned earlier, we want to synthesize the controller inputs in a distributed manner, i.e., each agent $\agtID$ solves $\MILP$ in the control horizon of $2H$, at each time step $\timeIndex$, to synthesize its own control inputs.

Alg. \ref{Alg:MPC} illustrates the proposed receding horizon control procedure that each agent $\agtID$ uses to synthesize its own control inputs in the time length $\tau$. In Alg. \ref{Alg:MPC}, at each time step $\timeIndex$, each agent $i$ computes (1) a finite agent-level trajectory $\AgtTrajNoT{\agtID}$ with a length of $2\horizon$ that satisfies $\formula{}{\agtID}$ in the first $\horizon$ time steps and (2) computes a vector of finite estimated system-level trajectories $\GMNoT{\agtID}$ with a length of $2\horizon$ that satisfies $\formula{}{\textrm{s}}$ in the first $\horizon$ time steps while taking into consideration that the actual system-level trajectory $\ActGMNoT{}$ satisfies $\formula{}{\textrm{s}}$ with a minimum probability $\MinConf$. In other words, we enforce the satisfaction of $\formula{}{\agtID}$ and $\formula{}{\textrm{s}}$ in the first $\horizon$ time steps of $\AgtTrajNoT{\agtID}$ and $\GMNoT{\agtID}$, respectively.
\vspace{-7pt}
\begin{remark}\label{re:IndSpec}
Because each agent $\agtID$ has access to its own actual trajectory $\AgtTrajNoT{\agtID}$ at each time step $\timeIndex$, we enforce the MILP constraints related to each $\robustness[\agtID](\AgtTraj{0:\horizon-1}{\agtID},{\formula{}{{\agtID}}},\STLtimeIndex)>\MinRob$ directly using $\AgtTrajNoT{\agtID}$ for each agent $\agtID$, where $\STLtimeIndex\in[0,H-1]$.
\end{remark}

In $\MILP$, we implement the constraint $\robustness[\agtID](\AgtTraj{0:\horizon-1}{\agtID},{\formula{}{{\agtID}}},\STLtimeIndex)>\MinRob$ using the MILP technique introduced in \cite{MPCDonze}. $\ConfidenceLe{}{\agtID}((\ActGM{0:\horizon-1}{},\STLtimeIndex)\models{\formula{}{\textrm{s}}})>\MinConf$ enforces that the actual system-level trajectory $\ActGM{\STLtimeIndex}{}$ satisfies $\phi_{\textrm{s}}$ with the minimum probability $\MinConf$ for all $\STLtimeIndex\in[0,\horizon-1]$. For implementing the constraint $\ConfidenceLe{}{\agtID}((\ActGM{0:\horizon-1}{},\STLtimeIndex)\models{\formula{}{\textrm{s}}})>\MinConf$, we enforce $\robustness[\agtID](\GM{0:\horizon-1}{\agtID},{\formula{}{{\textrm{s}}}},\STLtimeIndex)>\MinRob$ and we calculate the estimated error bound $\EstErrorBound{\STLtimeIndex}{\agtID}$ using Eq. \eqref{eq:errorbound} for all $\STLtimeIndex\in[0,\horizon-1]$. Then, we encode the constraint $\ConfidenceLe{}{\agtID}((\ActGMNoT{},\timeIndex)\models{\pi})\geq{1}-\frac{\EstErrorBound{j}{}}{\MinRob}$, and after that we recursively encode the constraints listed in Lemma \ref{lem:MonRul} according to $\formula{}{\textrm{s}}$ for all $\STLtimeIndex\in[0,\horizon-1]$.


At time step $\timeIndex=0$, each agent $\agtID$ solves the optimization problem \eqref{prob:optmizeV} to obtain matrix $\TopMatrix$ (Line 1 in Alg. \ref{Alg:MPC}). At the next time step $\timeIndex=1$, each agent $\agtID$ calculates the Kalman gain matrices $\KalmanGain{\timeIndex}{}$ in the time horizon $[\timeIndex,\timeIndex+2\horizon-1]$ using Eq. \eqref{eq:KalmanGain} and \eqref{eq:ErrorCo}. Also, each agent $\agtID$ sets the $i$-th row of  $\GM{0}{\agtID}$ equal to $\EstStAgt{0}{\agtID}$ and the rest of the entries of $\GM{0}{\agtID}$ are equal to 0.

If at time step $\timeIndex=1$, agent $\agtID$ is active, then agent $\agtID$ communicates with one of its neighboring agents $\counterl\in{\neighborSet{\agtID}}$ with a uniform random probability $\frac{1}{|\AgentSet{i}|}$ to acquire $\Noisyy{1}{\counterl}$ and update the $\counterl$-th row of $\EstSt{1}{\agtID}$ (Lines 4-8 in Alg. \ref{Alg:MPC}). At Line 9 in Alg. \ref{Alg:MPC}, $P$ is a $1\times\horizon$ vector of variables that represents minimum required robustness degree at each time step in the time horizon $[0,\horizon]$. Here, we choose a robustness degree of value $\MinRob$ for all $\timeIndex\in[0,\horizon-1]$ (Line 9 in ALg. \ref{Alg:MPC}). At Line 10 in Alg. \ref{Alg:MPC}, $\mathcal{P}$ is a $1\times\horizon$ vector of variables that represents the minimum required confidence level of each agent $\agtID$ in satisfying $\formula{}{\textrm{s}}$ at each time step in the time horizon $[0,\horizon-1]$. Here, we choose a confidence level of value $\MinConf$ for all $\timeIndex\in[0,\horizon-1]$.

At Line 11 of Alg. \ref{Alg:MPC}, at each time step $t$, each agent $\agtID$ calculates a sequence of control inputs $\BoldControlInput{\timeIndex}{\agtID}=[\BoldControlInputHo{0}{\agtID}{\timeIndex},\BoldControlInputHo{1}{\agtID}{\timeIndex},...,\BoldControlInputHo{2\Horizon-1}{\agtID}{\timeIndex}]$ in the control horizon $2\horizon$ by solving $\MILP$. Here, $\BoldControlInputHo{2\Horizon-1}{\agtID}{\timeIndex}$ represents the predicted vector of control inputs calculated at the future time step $\timeIndex+2\Horizon-1$ by agent $\agtID$, and the current time step is $\timeIndex$.

\begin{algorithm}
	\small
	\DontPrintSemicolon
	
\Input{A positive large number \M\newline
$\max\limits_{i\in{\AgentSet{}}}\{\horizon(\formula{}{\textrm{s}}),\horizon(\formula{}{i})\}$ and the time length $\tau$ \newline
The minimum confidence level $\MinConf$\newline
The minimum robustness degree $\MinRob$\newline
The adjacency matrix $\GraphMat$\newline
The number of the agents $|\AgentSet{}|$\newline
The initial state covariance matrix $\StCovarianceMat{0}{}$ and the noise covariance matrix $\NoiseCo{}{}$\newline
Lipschitz constants $\mathcal{L}_1$ and $\mathcal{L}_2$, $\StNoT{\textrm{max}}$, $\zeta_{\textrm{max}}$, and $v_{\textrm{max}}$
} 
Agent $\agtID$ calculates $\TopMatrix$ by solving \eqref{prob:optmizeV}\\
\While{$\timeIndex<\tau-\horizon$ or \textrm{Diff-MILP} is feasible}{

Agent $\agtID$ calculates the Kalman gain matrices $\KalmanGain{\counterk}{}$ for all $\counterk\in\{\timeIndex,\timeIndex+1,...,\timeIndex+2\horizon-1\}$ using Eq. \eqref{eq:KalmanGain}\\
\If{agent $\agtID$ is active}
{Agent $\agtID$ communicates with agent $\counterl\in{\neighborSet{\agtID}}$ to acquire the noisy output $\Noisyy{\timeIndex}{\counterl}$\\

Agent $\agtID$ updates the $\counterl$-th row of  $\BoldNoisyy{\timeIndex}{\agtID}$\\
Agent $\agtID$ updates the $\counterl$-th row of $\EstSt{\timeIndex}{\agtID}$ using Eq. \eqref{eq:Est} with the updated $\BoldNoisyy{\timeIndex}{\agtID}$
}


$\MinRob\gets\PiCon{\counterj}{}$, $\forall{\counterj}\in\{0,...,\horizon-1\}$\\

$\MinConf\gets\Confidence{\counterj}{}$, $\forall{\counterj}\in\{0,...,\horizon-1\}$\\

Compute $\BoldControlInput{\timeIndex}{\agtID}=[\BoldControlInputHo{0}{\agtID}{\timeIndex},\BoldControlInputHo{1}{\agtID}{\timeIndex},...,\BoldControlInputHo{2\Horizon-1}{\agtID}{\timeIndex}]$ by solving $\MILP$\\
} 
 
	\caption{Distributed Differentially Private Receding Horizon Control for MTL Specifications}
    \label{Alg:MPC}

\end{algorithm}

\vspace{-15pt}

\section{Case Study}\label{sec:case}
In this section, we implement the proposed approach in a case study. We consider an MAS consisting of four agents with the set of nodes $\mathcal{C}=\{c_1,c_2,c_3,c_4\}$ and the set of edges $\mathcal{E}=\{e_{12},e_{14},e_{23},e_{34}\}$. We set $\AugA{}{}$, $\AugB{}{}$, and $\AugC{}{}$ to be $4\times{4}$ diagonal matrices where $\AugA{}{ii}=\AugB{}{ii}=\AugC{}{ii}=0.1~\text{for}~{i\in\{1,2,3,4\}}$. 

\begin{wrapfigure}{r}{0.2\textwidth}
  \centering
  \vspace{-1mm}
  \includegraphics[width=0.2\textwidth]{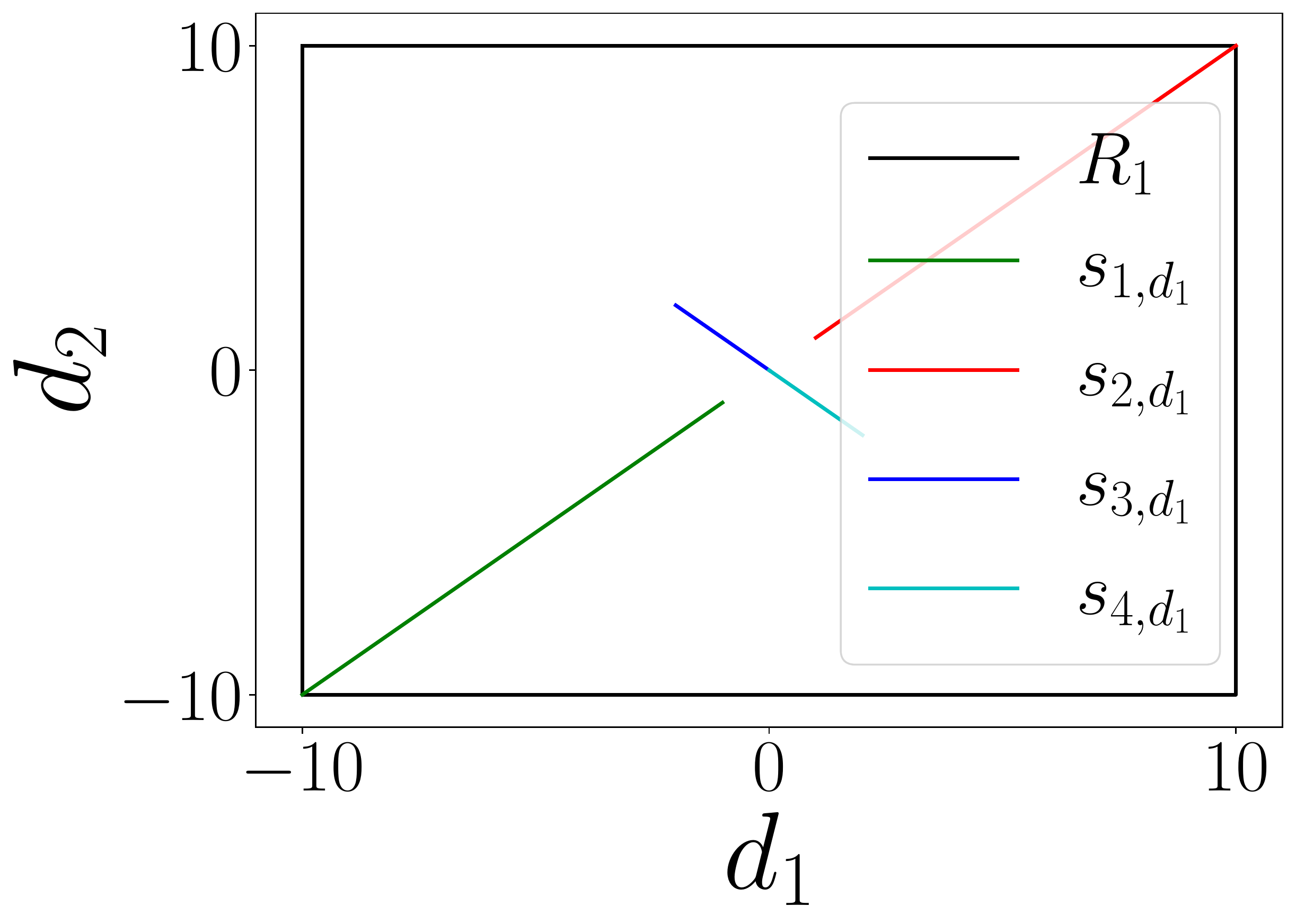}

        \caption{The illustration of the 2D planar environment with the paths that the four agents take.}
        \label{fig:2DPlan}
\end{wrapfigure}
We consider a 2-dimensional planar environment $\mathcal{S}\subseteq\mathbb{R}^2$ in which we have the following areas: (1) $R_1$ is a rectangular area centred at $(0,0)$ with the width and length equal to 20, and (2) $R_2$ is rectangular area centred at $(0,0)$ with the width and length equal to 100. Fig. \ref{fig:2DPlan} represents the 2D environment $\mathcal{S}$ containing areas $R_1$ and $R_2$ in addition to the paths of agents 1, 2, 3, and 4 in the time length of $\tau=500$s ($R_2$ is not shown in Fig. \ref{fig:2DPlan} for the better representation of the paths taken by the agents). We also denote the four quadrants of the 2D plane (starting from the positive quadrant going clockwise) by $\Tilde{Q}_1$, $\Tilde{Q}_2$, $\Tilde{Q}_3$, and $\Tilde{Q}_4$, respectively.


We specify the system-level specification as $\formula{}{\textrm{s}}:=\leventually_{[0,15]}(\ActGMNoT{}\in{R_2})$, and the agent-level specifications as $\formula{}{1}:=(\leventually_{[5,10]}(\AgtTrajNoT{1}\in{R_1}))\land(\lglobally_{[0,15]}(\AgtTrajNoT{1}\in{\Tilde{Q}_3}))$, $\formula{}{2}:=(\leventually_{[10,15]}(\AgtTrajNoT{2}\in{R_1}))\land(\lglobally_{[0,15]}(\AgtTrajNoT{2}\in{\Tilde{Q}_1}))$, $\formula{}{3}:=(\leventually_{[0,10]}(\AgtTrajNoT{3}\in{R_1}))\land(\lglobally_{[0,15]}(\AgtTrajNoT{3}\in{\Tilde{Q}_4}))$, and $\formula{}{4}:=(\leventually_{[0,10]}(\AgtTrajNoT{4}\in{R_1}))\land(\lglobally_{[0,15]}(\AgtTrajNoT{4}\in{\Tilde{Q}_2}))$. $\formula{}{\textrm{s}}$ reads as ``\textit{the centroid  of the MAS should eventually reach the area $R_2$ in the next 15 time steps}". $\formula{}{1}$ reads as ``\textit{agent 1 should eventually reach area $R_1$ in the time span of $[5,10]$ and always stay in the third quadrant in the time span $[0,15]$}". $\formula{}{2}$, $\formula{}{3}$, and $\formula{}{4}$ can be translated to natural language in a similar manner.


For synthesizing the control inputs for satisfying the given STL specifications ($\formula{}{\textrm{s}}$, $\formula{}{1}$, $\formula{}{2}$, $\formula{}{3}$, and $\formula{}{4}$), we calculate the minimum necessary lengths of the given STL specifications as $\horizon(\formula{}{\textrm{s}})=\horizon(\formula{}{1})=\horizon(\formula{}{2})=\horizon(\formula{}{3})=\horizon(\formula{}{4})=15$. We choose the control horizon to be  $2\horizon(\formula{}{1})$. Also, we set $\M=1000$, $\MinConf=0.9$, $\MinRob=0.1$, $\St{0}{1,d_1}=\St{0}{1,d_2}=-100$, $\St{0}{2,d_1}=\St{0}{1,d_2}=100$, $\St{0}{3,d_1}= -100$ and $\St{0}{3,d_2}=10$, $\St{0}{4,d_1}= 100$ and $\St{0}{4,d_2}=-10$ , and $u_{\text{min}}=-2~\text{and}~u_{\textrm{max}}=2$. Additionally, we have $\GMnonBold{0}{1,d_1}=\GMnonBold{0}{1,d_2}=-100$, $\GMnonBold{0}{2,d_1}=\GMnonBold{0}{2,d_2}=100$, $\GMnonBold{0}{3,d_1}=-100$ and $\GMnonBold{0}{3,d_2}=10$, and $\GMnonBold{0}{4,d_1}=100$ and $\GMnonBold{0}{4,d_2}=-10$. In addition, we add the Gaussian noise to the outputs $y_{\agtID}$ with the differential privacy parameters $\DiffEps{}{}\in[\log(6),\log(10)]$ and $\DiffDelta{}{}\in[0.1,0.4]$. 

Fig. \ref{fig:zeta_1} represents the obtained results for estimated system-level trajectories for agent 1, 2, 3, and 4 in the dimension $d_1$ and $d_2$, respectively. We should note that we did not include the initial values of the estimated system-level trajectories for better representation of the results in the time length of $\tau=500$s.  
As can be seen, $\formula{}{\textrm{s}}$ is satisfied with the probability of 1 by the actual system-level trajectory $\ActGMNoT{}$ which is higher than $\MinConf=0.9$. In addition, Fig. \ref{fig:zeta_1} also shows that when the time goes to infinity, the estimated system-level trajectories converge to the actual system-level trajectory. In addition, we measure the average of the absolute value of the differences between the the actual system-level trajectory and the estimated system-level trajectory at different time indices $k$ for agent $i$ at dimension $d_l$ as $\frac{1}{N}\sum\limits_{\counterk=1}^{500}|\GMnonBold{\counterk}{i,d_l}-(\eta)_{d_l}[k]|$. This value for agent 1 is approximately 12.3 at both dimensions. Similarly for agents 2, 3 and 4, at both dimensions, we have 12.6, 0.8, and 1.1, respectively.

Fig. \ref{fig: agt1} illustrates the actual agent-level trajectories of agent 1, 2, 3, and 4 in the time length of $\tau = 500s$. $\AgtTrajNoT{1}$, $\AgtTrajNoT{2}$, $\AgtTrajNoT{3}$, and $\AgtTrajNoT{4}$ converge to zero after time-step 285 due to enforcing the constraints $\lglobally_{[0,21]}(\AgtTrajNoT{1}\in{\Tilde{Q}_3})$ in $\formula{}{1}$, $\lglobally_{[0,15]}(\AgtTrajNoT{2}\in{\Tilde{Q}_1})$ in $\formula{}{2}$, $\lglobally_{[0,15]}(\AgtTrajNoT{3}\in{\Tilde{Q}_4})$ in $\formula{}{3}$, and $\lglobally_{[0,15]}(\AgtTrajNoT{3}\in{\Tilde{Q}_2})$ in $\formula{}{4}$. In addition, the results show that $\AgtTrajNoT{1}$, $\AgtTrajNoT{2}$, $\AgtTrajNoT{3}$, and $\AgtTrajNoT{4}$ satisfy $\formula{}{1}$, $\formula{}{2}$, $\formula{}{3}$, and $\formula{}{4}$ respectively.

\begin{figure}
     \centering
         \includegraphics[scale=0.15]{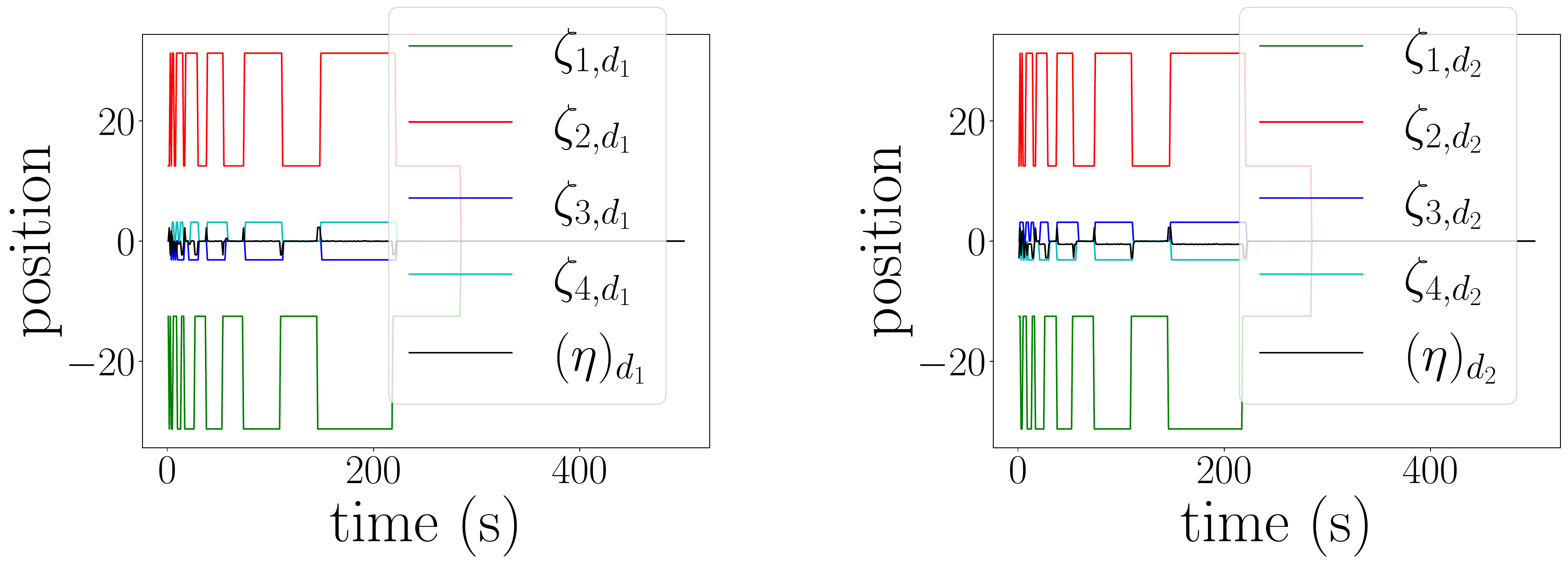}

        \caption{A comparison between the estimated system-level trajectories ($\zeta_1$, $\zeta_2$, $\zeta_3$, and $\zeta_4$) and the actual system-level trajectory ($\eta$) in the first dimension $d_1$ (left) and the second dimension $d_2$ (right). The system-level STL specification $\formula{}{\textrm{s}}$ is satisfied with the probability of 1 (higher than $\MinConf=0.9$) by the actual system-level trajectory $\ActGMNoT{}$. In addition, when the time goes to infinity, the estimated system-level trajectories converge to the actual system-level trajectory.}
        \label{fig:zeta_1}
\end{figure}

\begin{figure}

         \centering
         \includegraphics[scale=0.15]{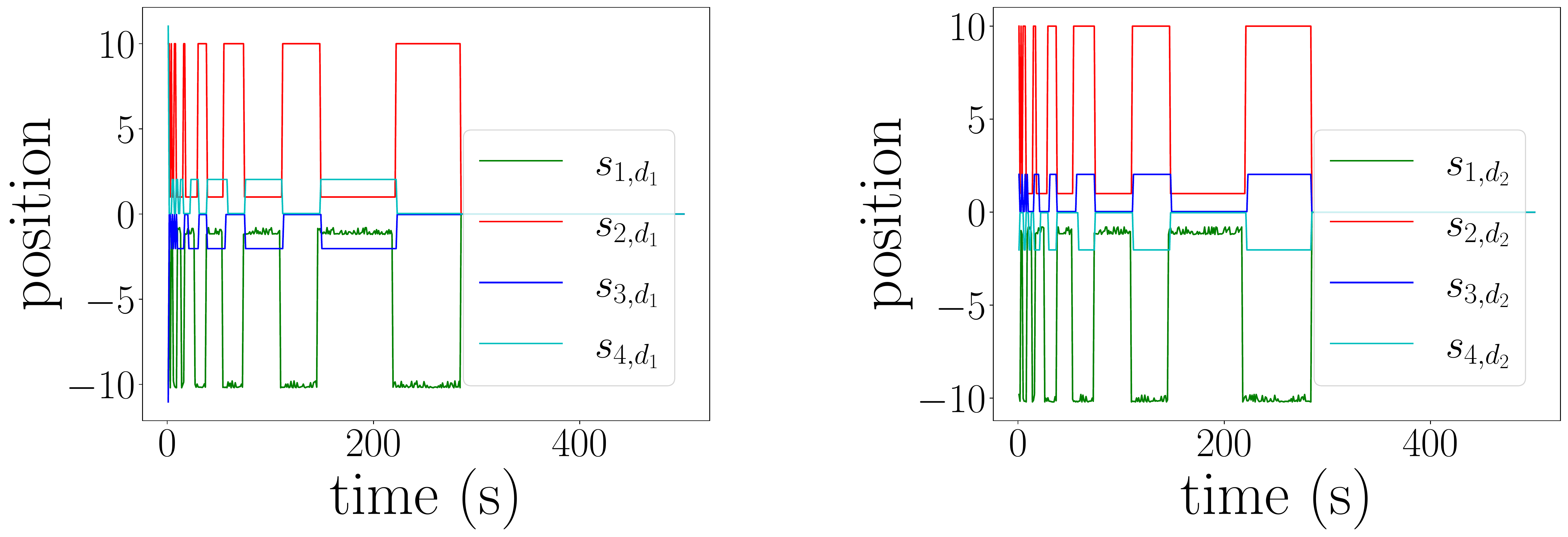}
        \caption{Obtained actual agent-level trajectories for agents 1, 2, 3, and 4 in the first dimension $d_1$ (left) and second dimension $d_2$ (right). $\AgtTrajNoT{1}$, $\AgtTrajNoT{2}$, $\AgtTrajNoT{3}$, and $\AgtTrajNoT{4}$ converge to zero due to enforcing the constraints $\lglobally_{[0,15]}(\AgtTrajNoT{1}\in{\Tilde{Q}_3})$ in $\formula{}{1}$, $\lglobally_{[0,15]}(\AgtTrajNoT{2}\in{\Tilde{Q}_1})$ in $\formula{}{2}$, $\lglobally_{[0,15]}(\AgtTrajNoT{3}\in{\Tilde{Q}_4})$ in $\formula{}{3}$, and $\lglobally_{[0,15]}(\AgtTrajNoT{4}\in{\Tilde{Q}_2})$ in $\formula{}{4}$.}
        \label{fig: agt1}
\end{figure}

\section{Conclusion}
In this paper, we proposed a distributed receding horizon control (RHC) for multi-agent systems (MAS) with MTL specifications. A potential future direction of the proposed approach is synthesizing control inputs for the situation where agents share their partial outputs instead of noisy outputs. In addition, incorporating learning-based methods in the setting where the system dynamics of the agents is can be another potential direction of the proposed approach \cite{ZheDist}.    

\titleformat{\section}{\centering\normalfont\scshape}{\appendixname~\thesection }{0em}{~}

\bibliographystyle{IEEEtran}
\bibliography{biblography}

\end{document}